\documentclass[a4paper,english]{elsarticle}

\usepackage[utf8]{inputenc}
\usepackage{graphicx}
\usepackage{xspace}
\usepackage{tikz}
\usetikzlibrary{fit,calc}
\usepackage{times}
\usepackage{float,verbatim,amssymb,amsmath}
\usepackage{subcaption}
\usepackage{enumitem}
\newtheorem{definition}{Definition}
\newtheorem{lemma}{Lemma}
\newtheorem{observation}{Observation}

\newtheorem{remark}{Remark}
\newtheorem{theorem}{Theorem}

\newtheorem{labrule}{Labeling rule}
\usepackage{placeins}
\usepackage{varioref}
\usepackage{setspace}
\graphicspath{{./figs/}}
\tikzset{defnode/.style={draw, circle, fill=gray, inner sep=.6pt}}
\tikzset{mis/.style={draw, circle, red, inner sep=1.5pt}}
\xdefinecolor{red}{rgb}{0.8,0.1,0.1}
\xdefinecolor{green}{rgb}{0.0,0.7,0.0}

\floatstyle{ruled}
\newfloat{algorithm}{tp}{lop}
\floatname{algorithm}{Algorithm}

\newcommand{\forallMIS}{\ensuremath{{\cal RMIS^\forall}}\xspace}
\newcommand{\existsMIS}{\ensuremath{{\cal RMIS^\exists}}\xspace}

\newcommand{\ie}{{i.e.,}\xspace}
\newcommand{\eg}{{e.g.}\xspace}

\setlength{\fboxsep}{1.5pt}
\newcommand{\typePI}{\textsl{PI}\xspace}
\newcommand{\typePIdesc}{Possibly In\xspace}

\newcommand{\typePO}{\textsl{PO}\xspace}
\newcommand{\typePOdesc}{Possibly Out\xspace}

\newcommand{\typePE}{\textsl{PE}\xspace}
\newcommand{\typePEdesc}{Possibly External\xspace}

\newcommand{\ABCT}{\ensuremath{\mathcal{ABC}}-tree\xspace}

\newcommand{\A}{\ensuremath{\mathcal{A}}\xspace}
\newcommand{\B}{\ensuremath{\mathcal{B}}\xspace}
\newcommand{\C}{\ensuremath{\mathcal{C}}\xspace}
\newcommand{\PV}{\ensuremath{\mathcal{P}}\xspace}

\newcommand{\children}{{\tt children}\xspace}
\newcommand{\parent}{{\tt parent}\xspace}
\newcommand{\descendants}{{\tt descendants}\xspace}

\newcommand{\R}{\ensuremath{R}\xspace}

\newenvironment{proof}{\noindent\textbf{Proof:}}{{\hfill $\Box$}\vspace{.5pc}}

\begin{document}

\begin{frontmatter}

\title{Robustness: a New Form of Heredity Motivated by Dynamic Networks}

\author{Arnaud Casteigts}
\ead{arnaud.casteigts@labri.fr}
\address{Université de Bordeaux, CNRS,  Bordeaux INP, LaBRI, UMR 5800, France}

\author{Swan Dubois}
\ead{swan.dubois@lip6.fr}
\address{Sorbonne Universit\'e, CNRS, Inria, LIP6 UMR 7606, France}

\author{Franck Petit}
\ead{franck.petit@lip6.fr}
\address{Sorbonne Universit\'e, CNRS, Inria, LIP6 UMR 7606, France}

\author{John M. Robson}
\ead{robson@labri.fr}
\address{Université de Bordeaux, CNRS,  Bordeaux INP, LaBRI, UMR 5800, France}

\date{}

\setlength\textfloatsep{10pt plus 2pt minus 2pt}
\pagestyle{plain}



\begin{abstract}
We investigate a special case of hereditary property in graphs, referred to as {\em robustness}. A property (or structure) is called
robust in a graph $G$ if it is inherited by all the connected spanning subgraphs of $G$. We motivate this
definition using two different settings of dynamic networks. The first corresponds to networks of low dynamicity, where some links may be permanently removed so long as the network remains connected. The second corresponds to highly-dynamic networks, where communication links appear and disappear arbitrarily often, subject only to the requirement that the entities are temporally connected in a recurrent fashion ({\it i.e.} they can always reach each other through temporal paths). Each context induces a different interpretation of the notion of robustness.

We start by motivating the definition and discussing the two interpretations, after what we consider the notion independently from its interpretation, taking as our focus the robustness of {\em maximal independent sets} (MIS). A graph may or may not admit a robust MIS. We characterize the set of graphs \forallMIS in which {\em all} MISs are robust. Then, we turn our attention to the graphs that {\em admit} a robust MIS (\existsMIS). This class has a more complex structure; we give a partial characterization in terms of elementary graph properties, then a complete characterization by means of a (polynomial time) decision algorithm that accepts if and only if a robust MIS exists. This algorithm can be adapted to construct such a solution if one exists. 
\end{abstract}

\begin{keyword}
Heredity in graphs \sep Highly-dynamic networks \sep Minimal independent sets \sep Temporal covering structure.

\end{keyword}

\end{frontmatter}

\section{Introduction}
\label{sec:introduction}

The area of dynamic networks covers a variety of contexts, ranging from nearly-static networks where the network topology changes only occasionally, to highly-dynamic settings where the entities interact in a volatile way, through communication links which appear and disappear arbitrarily often and unexpectedly. In the second case, the immediate structure of the communication graph at given time (\ie its structural snapshot) does not capture much information, the main features being rather of a temporal nature. For example, the snapshots may never be connected, and yet offer a form of connectivity over time and space, called {\em temporal connectivity}. A number of formalisms were proposed recently to capture the temporality of these contexts, such as evolving graphs, time-varying graphs, link streams, and temporal graphs (see \eg~\cite{Fer04,CFQS12,LVM17,KKK02}, among many others). In this article, we introduce a graph concept which is strongly motivated by the highly-dynamic setting; however, the notion itself can be formulated and studied in terms of standard graphs, independently from its temporal interpretation.

Given a (standard) graph $G$, a given property (or structure) in $G$ is said to be {\em robust} if and only if it is inherited by every {\em connected spanning subgraph} of $G$. Hereditary properties based on the removal of edges are generally called {\em monotone} (see for instance~\cite{heredity1,heredity2,K88}). Robustness is therefore a particular case of monotonicity, in which the subgraph is additionally constrained to remain spanning and connected. (This concept is different from other uses of the term ``robustness'' in the literature, see \eg~\cite{robustness3,robustness1,robustness2}.) 

As explained, robustness can be interpreted in several different ways. (1) {\it Static networks with permanent link crashes:} Here, the network is essentially static; however,
it deteriorates over time and some of the links may definitively stop working (or equivalently, be removed). 
The network is to be used so long as it still connects all the nodes. It is easy to see, that if a property is robust in such a network, then it will remain valid so long as the network is used, despite the uncertainty regarding which of the links will crash. (2)~{\it Recurrent temporal connectivity in highly-dynamic networks:} Here, the immediate structure of the network does not matter as much as its temporal properties. In particular, a basic assumption is that temporal paths exists recurrently between all the entities, corresponding to Class~{$\cal TC^R$} in~\cite{Cas18} and Class~5 in~\cite{CFQS12}. Dubois et al. observed in~\cite{DKP15} that this assumption is equivalent to the guarantee that a subset of the edges always reappears, although in general such as subset is not known in advance. Thus, robustness can be interpreted in $\cal TC^R$ as the fact that a property is satified with respect to the subset of recurrent edges, whichever these edges are. The choice for a given interpretation is not mandatory, as robustness can be studied independently. However, the second interpretation being the original motivation here, we develop it further in a dedicated (but optional) section in the end of the paper.

In this paper, we illustrate the concept of robustness through a classical covering problem called {\em maximal independent set}
(MIS), which consists of selecting a subset of vertices none of which are
neighbors (independence) and which is maximal for inclusion.
Let us first observe that robust MISs may or may not exist depending on the considered graph. 
For example, if the graph is a triangle, then only one such structure exists up to isomorphism, consisting of a single
vertex (refer to Figure~\ref{fig:mis-a}). If an edge next to the selected vertex is removed, then this set is no longer
maximal. Therefore, the triangle graph admits no robust MIS.
\begin{figure}
  \centering
  \begin{minipage}[b]{.16\linewidth}
    \begin{tikzpicture}
      \tikzstyle{every node}=[circle, inner sep=1.2pt, fill=darkgray]
      \path (0,0) node (a) {};
      \path (a)+(0:1) node (b) {};
      \path (a)+(-60:1) node (c) {};
      
      \draw (a)--(b)--(c)--(a);
      
      \tikzstyle{every node}=[circle, draw, red, inner sep=2pt]
      \path (a) node {};
    \end{tikzpicture}
    \subcaption{}\label{fig:mis-a}
  \end{minipage}
  \begin{minipage}[b]{.26\linewidth}
    \begin{tikzpicture}
      \tikzstyle{every node}=[circle, inner sep=1.2pt, fill=darkgray]
      \path (0,0) node (a) {};
      \path (a)+(-30:1) node (b) {};
      \path (b)+(0:1) node (c) {};
      \path (b)+(-60:1) node (d) {};
      \path (c)+(30:1) node (e) {};
      
      \draw (a)--(b)--(c)--(e);
      \draw (b)--(d)--(c);
      
      \tikzstyle{every node}=[circle, draw, red, inner sep=2pt]
      \path (a) node {};
      \path (c) node {};
    \end{tikzpicture}
    \subcaption{}\label{fig:mis-b}
  \end{minipage}
  \begin{minipage}[b]{.26\linewidth}
    \begin{tikzpicture}
      \tikzstyle{every node}=[circle, inner sep=1.2pt, fill=darkgray]
      \path (0,0) node (a) {};
      \path (a)+(-30:1) node (b) {};
      \path (b)+(0:1) node (c) {};
      \path (b)+(-60:1) node (d) {};
      \path (c)+(30:1) node (e) {};
      
      \draw (a)--(b)--(c)--(e);
      \draw (b)--(d)--(c);
      
      \tikzstyle{every node}=[circle, draw, red, inner sep=2pt]
      \path (a) node {};
      \path (d) node {};
      \path (e) node {};
    \end{tikzpicture}
    \subcaption{}\label{fig:mis-c}
  \end{minipage}
  \begin{minipage}[b]{.18\linewidth}
    \begin{tikzpicture}
      \tikzstyle{every node}=[circle, inner sep=1.2pt, fill=darkgray]
      \path (0,0) node (a) {};
      \path (a)+(0:1) node (b) {};
      \path (b)+(-90:1) node (c) {};
      \path (c)+(-180:1) node (d) {};
      
      \draw (a)--(b)--(c)--(d)--(a);
      
      \tikzstyle{every node}=[circle, draw, red, inner sep=2pt]
      \path (a) node {};
      \path (c) node {};
    \end{tikzpicture}
    \subcaption{}\label{fig:mis-d}
  \end{minipage}
  \caption{\label{fig:examples} Four examples of MISs in various graphs.}
\end{figure}
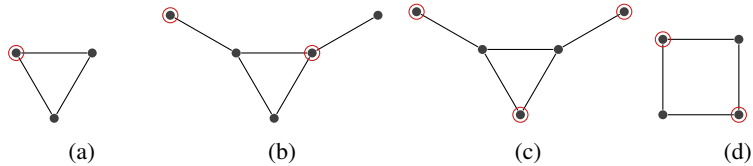
Some graphs admit both robust and non-robust MISs, as exemplified by the bull graphs on Figures~\ref{fig:mis-b}
(non-robust) and~\ref{fig:mis-c} (robust).
Finally, some graphs like the square graph (Figure~\ref{fig:mis-d}) are such that {\em all} MISs are robust. 

\subsection{Contributions}
In addition to the concept itself and the related discussions, we characterize exactly the set \forallMIS of the graphs in which all MISs are robust. 
To this end, we first define a class of graphs called {\em sputniks} (for reasons that will become clear later). 
Sputniks include, among others, all the trees, for which every property is trivially robust (since none
of the edges are removable). We show that \forallMIS consists exactly of the union of sputniks and complete
bipartite graphs. The interest of this {\em universal} class is that
finding a robust MIS in it amounts to finding a standard MIS.

The {\em existential} versions of this class, namely the set \existsMIS of those graphs which admit a robust solution seem to have a more complex structure.
We first give a sufficient condition and we show that this condition is necessary in the particular case of biconnected graphs (meaning here 2-vertex-connected). 
The more general case is addressed by means of an algorithm that decides if a given graph belongs to \existsMIS.
The trivial strategy for such an algorithm would amount to enumerating all MISs until a robust one is found. However, exponentially many MISs may
exist in general graphs (see Moon and Moser~\cite{MM65}, and~\cite{F87,GGG88} in the particular case of
connected graphs) and the validity of each one may have to be satisfied in exponentially many connected spanning
subgraphs. Motivated by this observation, we present a polynomial time decision algorithm, which can be adapted into a constructive algorithm (without significant overhead). Our algorithm relies on a particular decomposition of the graph into a tree
of 2-vertex-connected (biconnected) components called an ${\cal ABC}$-tree (a variant of block-cut
trees~\cite{Harary}), along which constraints are propagated as to the MIS status of intermediate vertices. The inner
constraints of non-trivial components are solved by a reduction to the 2-SAT problem. The yes-instances of this algorithm characterize \existsMIS, albeit indirectly. 
Whether \existsMIS admits a more elementary characterization in terms of graph properties is left open.

\subsection{Organization of the paper} Section~\ref{sec:definitions} presents the main definitions and concepts. 
Next, Section~\ref{sec:forallMIS} presents the characterization of class \forallMIS. 
In Section~\ref{sec:biconnectivity}, we show that if the graph is biconnected, then being bipartite is both necessary and sufficient to belong to \existsMIS. Then, we present the decision algorithm in Sections~\ref{sec:algo} through~\ref{sec:actual-algorithm}, its complexity in Section~\ref{sec:complexity} and its adaptation into a constructive algorithm in Section~\ref{sec:constructivity}.
In Section~\ref{sec:temporal-interpretation}, we develop the discussion regarding the temporal interpretation of robustness, motivated by highly-dynamic networks. We conclude in Section~\ref{sec:conclusion} by observing some additional features of robustness in general and a few open questions.

\section{Main Concepts and Basic Results}
\label{sec:definitions}

Let $G=(V,E)$ be a simple connected undirected graph on a finite set $V$ of $n$ vertices (or nodes). 
We denote by $N(v)$ the neighbors of vertex $v$, \ie the set $\{w \mid \{v,w\}\in E\}$. The degree of a vertex $v$ is $|N(v)|$. A vertex is {\em pendant} if it has degree~$1$. An {\em articulation point} (or {\em cut vertex}) is a vertex whose removal disconnects the graph. A {\em bridge} (or {\em cut edge}) is an edge whose removal disconnects the graph. 
We say that an edge is {\em removable} in a graph $G$ if it is not a bridge of $G$. Given $V' \subseteq V$, the induced subgraph $G[V']$ is the graph whose vertex set is $V'$ and whose edge set consists of all of the edges in $E$ that have both endpoints in $V'$.
In the context of this paper, a graph is said to be {\em biconnected} if it has at least three vertices, and it remains connected after the removal of any single vertex (\ie 2-vertex-connectivity). A biconnected {\em component} is a maximal biconnected subgraph.
Finally, a {\em spanning connected subgraph} of a graph $G=(V,E_G)$ is a graph $H=(V, E_H)$ such that $E_H \subseteq E_G$, and $H$ is connected. 
We define the concept of robustness as follows.

\begin{definition}[Robustness]
\label{def:robustness}
A property $P$ is {\em robust} in $G$ if and only if it is satisfied in every connected spanning subgraph of $G$ (including $G$ itself).
\end{definition}

In other words, a robust property holds even after an arbitrary number of edges are removed without disconnecting the graph. Robustness is a special case of hereditary property, and more precisely a special case of a decreasing monotone property (see for instance~\cite{K88}). 
The term ``property'' includes both basic graph properties and solutions to combinatorial problems. Our focus in this initial work is on the latter; however, looking at the robustness of basic graph properties might help understand this notion further, for instance, connectivity itself is a trivial robust property. Bipartiteness is also a robust that is discussed at several occasions in this paper. 

In the present work, we focus on the {\em maximal independent sets} (MIS) problem, which consists of selecting a subset of vertices none of which are neighbors (independence) and to which no further vertex can be added (maximality). 
Following Definition~\ref{def:robustness}, a robust MIS (RMIS, for short) in a graph $G$ is a subset of vertices which remains {\em maximal} and {\em independent} in every connected spanning subgraph of $G$. 

\begin{observation}
  \label{obs:independence}
The notion of independence is stable under the removal of edges.  Therefore, it is sufficient that an MIS remains maximal in order to be an RMIS.
\end{observation}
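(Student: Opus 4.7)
The plan is to verify the observation directly from the definitions, splitting it into its two natural clauses: first that independence is preserved under edge removal, and second that this allows us to reduce the RMIS condition to a maximality condition.

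For the first clause, I would let $S$ be an independent set in $G=(V,E)$ and let $H=(V,E_H)$ be any spanning subgraph with $E_H\subseteq E$. Independence of $S$ in $G$ means that for every pair $u,v\in S$, we have $\{u,v\}\notin E$. Since $E_H\subseteq E$, the same pair satisfies $\{u,v\}\notin E_H$, so $S$ is independent in $H$. This argument does not use connectivity of $H$ at all, so it applies in particular to every connected spanning subgraph.

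For the second clause, I would recall Definition~\ref{def:robustness} specialized to MIS: $S$ is an RMIS in $G$ if, in every connected spanning subgraph $H$, the set $S$ is both independent and maximal with respect to inclusion. The first clause shows the independence requirement is automatically inherited from $G$, so the only remaining obligation is maximality in every connected spanning subgraph. In particular, if $S$ is already an MIS of $G$ (so independent in $G$, hence in every $H$) and $S$ is maximal in every connected spanning subgraph of $G$, then $S$ is an RMIS. There is no real obstacle here; the statement is essentially a one-line unpacking of definitions, and the only thing to be careful about is to note explicitly that independence transfers without needing any assumption on $H$ beyond $E_H\subseteq E$, so the reduction to ``maximality only'' is clean.
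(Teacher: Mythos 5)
Your proof is correct and matches the paper's reasoning: the paper states this observation without proof, treating it as immediate from the definitions, and your argument is exactly the obvious definitional unpacking (non-adjacency is preserved under $E_H \subseteq E$, so only maximality can fail). Nothing to add.
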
 

Let us define the following two classes of graphs.

\begin{definition}[\forallMIS]
  Set of graphs in which all MISs are robust.
\end{definition}

\begin{definition}[\existsMIS]
  Set of graphs that admit at least one robust MIS.
\end{definition}

Trivially, $\forallMIS \subseteq \existsMIS$.
Finally, we define two classes of graphs which play a central role in this work, namely {\em complete bipartite graphs} and {\em sputnik graphs}, the latter being new.
\begin{definition}[Complete bipartite graph]
A complete bipartite graph is a graph $G=(V_1 \cup V_2,E)$ such that $V_1 \cap V_2 = \emptyset$ and $E = \{\{v_1,v_2\},v_1\in V_1\text{ and }v_2\in V_2\}$. 
\end{definition}

In other words, the vertices can be partitioned into two sets $V_1$ and $V_2$ such that every vertex in $V_1$ shares an edge with every vertex in $V_2$ (completeness), and these are the only edges (bipartiteness).

\begin{definition}[Sputnik]
\label{def:sputnik}
A graph is a {\em sputnik} if and only if every vertex $v$ belonging to a cycle has at least one pendant neighbor.
\end{definition}

An example of sputnik is shown on Figure~\ref{fig:sputnik}. This name was chosen by analogy with the well-known satellite. By the same analogy, we say that a vertex has an {\em antenna} if it has at least one pendant neighbor. 

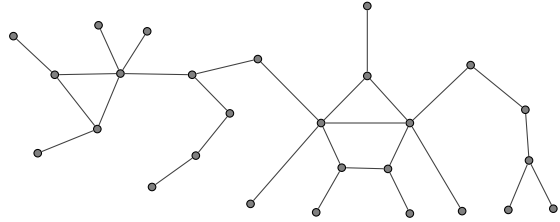
\begin{figure}[h]
\begin{center}
\begin{tikzpicture}[xscale=.8,yscale=.8]
  \tikzstyle{every node}=[defnode, inner sep=1pt]
  \path (1.5,10.3) node [] (v0) {};
  \path (2.9,10.48) node [] (v3) {};
  \path (3.7,10.38) node [] (v26) {};
  \path (1.9,8.36) node [] (v5) {};
  \path (3.78,7.8) node [] (v9) {};
  \path (6.48,7.38) node [] (v16) {};
  \path (8.02,7.36) node [] (v17) {};
  \path (7.32,10.8) node [] (v18) {};
  \path (5.4,7.52) node [] (v19) {};
  \path (8.88,7.4) node [] (v20) {};
  \path (9.64,7.42) node [] (v24) {};
  \path (10.38,7.44) node [] (v25) {};

  \path (2.18,9.66) node [] (v1) {}; 
  \path (3.26,9.68) node [] (v2) {}; 
  \path (2.88,8.76) node [] (v4) {}; 
  \path (6.56,8.86) node [] (v11) {}; 
  \path (7.32,9.64) node [] (v12) {}; 
  \path (8.02,8.86) node [] (v13) {}; 
  \path (6.9,8.12) node [] (v14) {}; 
  \path (7.66,8.1) node [] (v15) {}; 
  \path (4.5,8.32) node [] (v8) {}; 
  \path (9.98,8.24) node [] (v23) {}; 

  \path (4.44,9.66) node [] (v6) {}; 
  \path (9.02,9.82) node [] (v21) {}; 
  \path (5.06,9.02) node [] (v7) {}; 
  \path (5.52,9.92) node [] (v10) {}; 
  \path (9.92,9.08) node [] (v22) {}; 
  \tikzstyle{every path}=[];
  \draw [darkgray] (v0)--(v1);
  \draw [darkgray] (v1)--(v2);
  \draw [darkgray] (v2)--(v4);
  \draw [darkgray] (v4)--(v1);
  \draw [darkgray] (v5)--(v4);
  \draw [darkgray] (v2)--(v3);
  \draw [darkgray] (v2)--(v26);
  \draw [darkgray] (v2)--(v6);
  \draw [darkgray] (v7)--(v6);
  \draw [darkgray] (v9)--(v8);
  \draw [darkgray] (v8)--(v7);
  \draw [darkgray] (v6)--(v10);
  \draw [darkgray] (v10)--(v11);
  \draw [darkgray] (v11)--(v12);
  \draw [darkgray] (v12)--(v13);
  \draw [darkgray] (v13)--(v11);
  \draw [darkgray] (v11)--(v14);
  \draw [darkgray] (v14)--(v15);
  \draw [darkgray] (v15)--(v13);
  \draw [darkgray] (v16)--(v14);
  \draw [darkgray] (v17)--(v15);
  \draw [darkgray] (v18)--(v12);
  \draw [darkgray] (v19)--(v11);
  \draw [darkgray] (v20)--(v13);
  \draw [darkgray] (v21)--(v13);
  \draw [darkgray] (v22)--(v21);
  \draw [darkgray] (v23)--(v22);
  \draw [darkgray] (v23)--(v24);
  \draw [darkgray] (v23)--(v25);
\end{tikzpicture}
\end{center}
\caption{\label{fig:sputnik}Example of a sputnik graph (whose planarity is accidental).}
\end{figure}

\section{Characterization of \forallMIS}
\label{sec:forallMIS}

In this section, we show that the class of graphs in which all MISs are robust, \forallMIS, corresponds exactly to the
union of complete bipartite graphs and sputnik graphs. 
%
%
We first show that all the MISs in a complete bipartite graph are robust, and so are all MISs in a sputnik graph (sufficient conditions). 
Next, we show that these graphs are the only ones (necessary condition). 
The necessary side is more complex because two classes of graphs are involved. 
The proof proceeds by showing that if all MISs are robust in a graph which is not a sputnik, then that graph {\em must} be a complete bipartite graph.

\begin{lemma}
  \label{lem:BK}
  All MISs in a complete bipartite graph are robust.
\end{lemma}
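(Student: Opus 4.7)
The plan is to pin down exactly what the MISs of a complete bipartite graph look like, and then show that each such set stays maximal under any connectivity-preserving edge removal. Let $G=(V_1\cup V_2, E)$ be complete bipartite with both parts nonempty (otherwise $G$ is a single vertex and the claim is trivial). Since every pair $(v_1,v_2)\in V_1\times V_2$ is an edge, no independent set can contain vertices from both sides; so any independent set $I$ is a subset of $V_1$ or of $V_2$. If $I\subseteq V_1$ is maximal, then $I=V_1$, for there are no edges inside $V_1$, so any vertex of $V_1\setminus I$ could be added. Symmetrically for $V_2$. Thus the MISs of $G$ are exactly $V_1$ and $V_2$.

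Next I would verify robustness of each of these. Independence under edge removal is immediate from Observation~\ref{obs:independence}, so only maximality needs to be checked. Fix a connected spanning subgraph $H=(V_1\cup V_2, E_H)$ and consider, say, the set $V_1$. Maximality of $V_1$ in $H$ is equivalent to saying that every $v\in V_2$ has at least one neighbor in $V_1$ with respect to $E_H$. But in $G$ the vertex $v$ has no neighbor outside $V_1$ (by bipartiteness), so if $v$ had no $E_H$-neighbor in $V_1$ it would be isolated in $H$, contradicting the fact that $H$ is connected and spans $V_1\cup V_2$ (which has at least two vertices). Therefore $V_1$ remains an MIS in $H$, and the symmetric argument handles $V_2$.

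There is really no substantive obstacle here: the argument is essentially a one-line use of connectivity combined with the structural rigidity of complete bipartite graphs (no intra-part edges, so maximality forces the whole part; bipartiteness, so a vertex with no neighbor on the opposite side is isolated). The only thing to be careful about is the degenerate case where one of the parts is empty or the graph has a single vertex, which I would dispatch in one sentence at the start.
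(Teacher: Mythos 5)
Your proof is correct and follows essentially the same route as the paper: reduce to maximality via Observation~\ref{obs:independence}, note that the only MISs are $V_1$ and $V_2$, and observe that in any connected spanning subgraph each vertex of the opposite part must retain a neighbor in the chosen part, since all of its neighbors lie there. The only difference is that you explicitly justify that $V_1$ and $V_2$ are the only MISs and handle the degenerate cases, which the paper leaves implicit.
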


\begin{proof}
From Observation~\ref{obs:independence}, we only need to show that maximality is preserved.
There are two ways of chosing an MIS in a complete bipartite graph $G=(V_1\cup V_2, E)$, namely $V_1$ or $V_2$. Without loss of generality, let $V_1$ be chosen. Then, in all connected spanning subgraphs of $G$, every vertex in $V_2$ still has at least one neighbor in $V_1$ (otherwise the graph is disconnected), which preserves maximality.
\end{proof}

\begin{lemma}
  \label{lem:sputniks}
  All MISs in a sputnik graph are robust.
\end{lemma}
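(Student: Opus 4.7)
The plan is to use Observation~\ref{obs:independence} to reduce the problem to preserving maximality, and then exploit the pendant-neighbor guarantee of the sputnik definition together with the fact that bridges (and in particular pendant edges) must survive in any connected spanning subgraph.

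Fix an MIS $M$ of a sputnik $G$ and let $G'$ be an arbitrary connected spanning subgraph of $G$. By Observation~\ref{obs:independence}, $M$ remains independent in $G'$, so it suffices to show that every $v \notin M$ still has a neighbor in $M$ within $G'$. Since $M$ is an MIS of $G$, such a $v$ has at least one neighbor $u \in M$ in $G$. I would split on whether \emph{some} such edge $\{v,u\}$ is a bridge of $G$.

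If some neighbor $u \in M$ of $v$ is connected to $v$ by a bridge, then that edge lies in every connected spanning subgraph of $G$ (bridges are non-removable), so $u$ remains a neighbor of $v$ in $G'$ and we are done. Otherwise, every edge from $v$ to $M$ is a non-bridge, hence lies on a cycle, and in particular $v$ itself belongs to a cycle. The sputnik property then provides a pendant neighbor $p$ of $v$. Because $p$'s only neighbor in $G$ is $v \notin M$, maximality of $M$ forces $p \in M$ (otherwise $M \cup \{p\}$ would still be independent). Finally, the edge $\{v,p\}$ is a pendant edge and therefore a bridge, so it is present in $G'$; thus $p \in M$ is a neighbor of $v$ in $G'$, as required.

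The only real subtlety, and what I expect to be the main thing to get right in the writeup, is the observation that in the non-bridge case it is \emph{$v$} (not its chosen $M$-neighbor $u$) that must be shown to sit on a cycle, so that the sputnik hypothesis applies to $v$ and yields the pendant witness. The rest is bookkeeping about bridges surviving in $G'$ and about the forced membership of the pendant in any MIS whose owner is outside $M$.
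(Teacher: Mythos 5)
Your proof is correct and follows essentially the same route as the paper's: reduce to maximality via Observation~\ref{obs:independence}, observe that a non-covered situation could only arise through removal of cycle edges, and use the sputnik antenna (whose tip is forced into the MIS and whose edge is a bridge) to keep the vertex covered. Your case split on the vertex $v\notin M$ is a slightly more explicit organization of the same argument the paper phrases in terms of removable edges, and your noted subtlety (that it is $v$ itself that lies on a cycle) is handled correctly.
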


\begin{proof}
Again, by Observation~\ref{obs:independence}, we only need to show that maximality is preserved. 
By definition of a sputnik graph (Definition~\ref{def:sputnik}), if an edge $\{u,v\}$ is removable (\ie it belongs to a cycle), then both of its endpoints have an antenna. Maximality implies that either $u$ or the tip of its antenna are in the set. (The
same holds for $v$.) As antennas are bridges, they cannot be removed and thus maximality is preserved when edges from a cycle are removed.
\end{proof}


\begin{lemma} 
  \label{lem:necessary}
  If $G$ is not a sputnik, and every MIS in $G$ is robust, then $G$ must be a complete bipartite graph.
\end{lemma}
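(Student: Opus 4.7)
I would prove the contrapositive: if $G$ is not a sputnik and not complete bipartite, then $G$ admits a non-robust MIS. Since $G$ is not a sputnik, fix a witness $v$ on a cycle $C$ whose every neighbor has degree at least $2$, so that every edge along $C$ (and more generally every edge of $v$'s biconnected block $B$) is removable in $G$. The engine of the proof is a simple local certificate which I would first record as an auxiliary observation: if $M$ is an MIS and some $u \notin M$ has a unique $M$-neighbor $m$ with $\{u,m\}$ removable, then $M$ is not robust, since the spanning connected subgraph $G - \{u,m\}$ leaves $u$ undominated. The role of the non-sputnik hypothesis on $v$ is precisely to forbid the natural escape from this certificate, namely reaching $m$ via a bridge, which for a cycle-vertex would require a pendant neighbor.

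I would split the construction of a bad MIS according to why $G$ fails to be complete bipartite. \emph{Case~I}: $G$ is not bipartite. Take a shortest odd cycle $C^\star = x_0 x_1 \cdots x_{2k}$, chordless by minimality. Seed $M$ with the independent set $\{x_1, x_3, \ldots, x_{2k-1}\}$ on $C^\star$ and extend greedily to an MIS; the chordless odd structure forces $x_{2k} \notin M$ to have $x_{2k-1}$ as its only cycle-$M$-neighbor, and the cycle-edge $\{x_{2k-1}, x_{2k}\}$ is removable, giving the certificate provided the extension introduces no off-cycle $M$-neighbor of $x_{2k}$. \emph{Case~II}: $G$ is bipartite with sides $V_1, V_2$ and admits a non-edge $\{a,b\}$ across sides. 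Take a shortest $a$--$b$ path $P$ (necessarily of odd length $\geq 3$), build an MIS containing both $a$ and $b$ together with alternating interior vertices of $P$, and argue that some interior path-vertex ends up with a unique $M$-neighbor across a removable path-edge, yielding the same certificate.

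The main obstacle is controlling the greedy extension so that it does not create a second $M$-neighbor for the designated witness vertex, invalidating the certificate. I plan to handle this in two stages: (a) localize the entire argument to the biconnected block $B$ containing $v$, where every edge is removable and pendant decorations hanging off $B$ cannot interfere with the certificate; (b) preseed the extension with blockers --- for each ``dangerous'' vertex outside the critical cycle or path that might otherwise enter $M$, first insert one of its own neighbors into $M$ to block it. A secondary subtlety is synchronizing the non-sputnik witness $v$ with the not-complete-bipartite obstruction: I would argue that if $B$ itself were complete bipartite (resp.\ purely sputnik-like), then $G$ would inherit the corresponding structure, contradicting the hypotheses, so both obstructions can be assumed to lie inside $B$ and treated simultaneously.
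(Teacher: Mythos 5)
Your high-level plan (prove the contrapositive by exhibiting one non-robust MIS) matches the spirit of the paper, but the engine you rely on is too weak and the construction around it has two concrete failure points. First, your certificate only covers the case where a single edge removal suffices, i.e.\ the excluded witness has a \emph{unique} $M$-neighbor across a removable edge. The paper instead removes \emph{all but one} of the edges incident to the witness simultaneously (keeping one edge to a designated excluded neighbor in each component of $G\setminus\{u\}$), precisely because the witness may acquire several $M$-neighbors no matter how the MIS is completed. Your blocker trick does not rescue this: take $K_4$ minus an edge, with triangle $x_0x_1x_2$ and fourth vertex $y$ adjacent to $x_0$ and $x_2$. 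Seeding $\{x_1\}$ and designating $x_2$ as witness, the dangerous vertex $y$ has only $x_0$ and $x_2$ as neighbors --- one is adjacent to the seed, the other is the witness --- so no blocker can be inserted, $y$ is forced into the MIS by maximality, and $x_2$ ends up with two $M$-neighbors. A different seed happens to work here, but you give no rule for choosing it and no argument that a good choice always exists; nor do you prove the underlying claim that a one-edge certificate is always achievable for \emph{some} MIS.

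Second, your synchronization step rests on a false implication: a block $B$ of $G$ can be complete bipartite while $G$ is not (e.g.\ two $4$-cycles sharing a cut vertex, which is non-sputnik, non-complete-bipartite, and bipartite). Inside any single block of that graph there is neither an odd cycle nor a cross non-edge, so both of your localized cases are vacuous; you must then work globally, where a shortest path realizing a cross non-edge may traverse bridges and your ``every path edge is removable'' assumption fails (a $4$-cycle with a pendant path of length $2$ already exhibits this for some choices of the non-edge). The paper avoids both problems by anchoring everything at the non-sputnik witness $u$ itself: it shows that robustness of all MISs forces all neighbors of $u$ in the cycle-containing component to have identical neighborhoods (via the multi-edge weak-vertex removal), and then bootstraps that identity of neighborhoods outward to conclude that $G$ is complete bipartite. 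You would need either to prove that the one-edge certificate is always realizable and give a correct selection rule, or to strengthen your certificate to the weak-vertex form and recenter the construction on the non-sputnik witness --- at which point you are essentially reconstructing the paper's argument.
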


\begin{proof}
	If $G$ is not a sputnik, then some vertex $u$ belonging to a cycle $C \subseteq G$ has no antenna. Consider the graph $G \setminus \{u\}$ and call $X_1, X_2, \dots, X_k$ the components that would result, except that a copy of $u$ is added back to each (see Figure~\ref{fig:claim} for an illustration). Without loss of generality, let $X_1$ be the one containing $C$. 
The other components (if any) are such that every neighbor of $u$ has another neighbor than $u$ (otherwise $u$ would have a pendant neighbor). 

\noindent\textbf{Claim.} If all MISs in $G$ are robust, then all neighbors of $u$ in $X_1$ have the same set of neighbors.

We prove this claim by contradiction.
Let two neighbors $v_1, v_2$ of $u$ be such that $N(v_1) \ne N(v_2)$. We will show that at least one MIS is not robust
(see Figure~\ref{fig:claim} for an illustration). Without loss of generality, let $x$ be a vertex in $N(v_1) \setminus N(v_2)$. Then an MIS can be built which contains both $v_2$ and $x$ (as a special case, $x$ may be the same vertex as $v_2$, which is not a problem). For each of the components $X_{i\ge 2}$, choose an edge $\{u,w_i\} \in X_i$ and add another neighbor of $w_i$ to the MIS (such a neighbor exists, as we have already seen). One can see that $u$, $v_1$ and all $w_i$ can no longer enter the MIS because they all have neighbors in it. Now, choose the remaining elements of the MIS arbitrarily. 
Now, consider the following removals: in all components $X_{i\ge 2}$ all edges incident to $u$ except $\{u,w_i\}$ are
	removed; and in $X_1$, all edges incident to $u$ except $\{u,v_1\}$ are removed. The resulting graph is
	connected since all of the $X_i \setminus \{u\}$ are connected, but the set is no longer maximal because $u$
	could now be added in it.\hfill {\it (End of Claim)}\\

\newcommand{\convexpath}[2]{
  [   
  create hullcoords/.code={
    \global\edef\namelist{#1}
    \foreach [count=\counter] \nodename in \namelist {
      \global\edef\numberofnodes{\counter}
      \coordinate (hullcoord\counter) at (\nodename);
    }
    \coordinate (hullcoord0) at (hullcoord\numberofnodes);
    \pgfmathtruncatemacro\lastnumber{\numberofnodes+1}
    \coordinate (hullcoord\lastnumber) at (hullcoord1);
  },
  create hullcoords
  ]
  ($(hullcoord1)!#2!-90:(hullcoord0)$)
  \foreach [
  evaluate=\currentnode as \previousnode using \currentnode-1,
  evaluate=\currentnode as \nextnode using \currentnode+1
  ] \currentnode in {1,...,\numberofnodes} {
    let \p1 = ($(hullcoord\currentnode) - (hullcoord\previousnode)$),
    \n1 = {atan2(\y1,\x1) + 90},
    \p2 = ($(hullcoord\nextnode) - (hullcoord\currentnode)$),
    \n2 = {atan2(\y2,\x2) + 90},
    \n{delta} = {Mod(\n2-\n1,360) - 360}
    in 
    {arc [start angle=\n1, delta angle=\n{delta}, radius=#2]}
    -- ($(hullcoord\nextnode)!#2!-90:(hullcoord\currentnode)$) 
  }
}

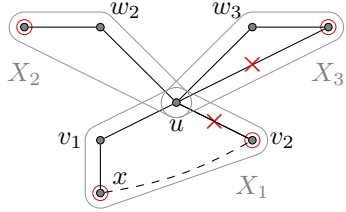
\begin{figure}[h]
  \centering
  \begin{tikzpicture}
    \tikzstyle{every node}=[defnode, inner sep=1pt]
    \path (0,1) node (a){};
    \path (1,1) node (w2){};
    \path (2,0) node (u){};
    \path (3,1) node (w3){};
    \path (4,1) node (b){};
    \path (1,-.5) node (v1){};
    \path (3,-.5) node (v2){};
    \path (1,-1.2) node (x){};
    \tikzstyle{every node}=[circle, draw, red, inner sep=2pt]
    \path (a) node{};
    \path (b) node{};
    \path (v2) node{};
    \path (x) node{};
    \tikzstyle{every node}=[]
    \path (u) node[below=3pt] {$u$};
    \path (w2) node[above right] {$w_2$};
    \path (w3) node[above left] {$w_3$};
    \path (v1) node[left=3pt] {$v_1$};
    \path (v2) node[right=3pt] {$v_2$};
    \path (v2) node[gray,below=10pt] {$X_1$};
    \path (a) node[gray,below=10pt] {$X_2$};
    \path (b) node[gray,below=10pt] {$X_3$};
    \path (x) node[above right=1pt,yshift=-1pt] {$x$};
    \draw (a)--(w2)--(u)--(w3)--(b);
    \draw (u) edge node[midway,red] {$\boldsymbol \times$} (b);
    \draw (u) edge node[midway,red] {$\boldsymbol \times$} (v2);
    \draw (x)--(v1)--(u)--(v2);
    \draw (x) edge[dashed, bend right=10] (v2);

    \tikzstyle{every path}=[inner sep=10pt]
    \draw[gray!80] \convexpath{a,w2,u}{.2cm};
    \draw[gray!80] \convexpath{u,w3,b}{.2cm};
    \draw[gray!80] \convexpath{u,v2,x,v1}{.2cm};
  \end{tikzpicture}
  \caption{\label{fig:claim}Contradictory example for proving the claim, where $x \in N(v_1)\setminus N(v_2)$. The dashed line represents the rest of component $X_1$.}
\end{figure}
        
Since $u$'s neighbors in $X_1$ have the same neighbors, this means in particular that none of these vertices has an
antenna. As a result, the arguments that applied to $u$ because of its absence of antenna, apply in turn to
$u$'s neighbors in $X_1$. In particular, the neighbors of these vertices (including $u$) must have the same
set of neighbors. This implies that (1) $u$ can no longer be an articulation point, thus $G=X_1$ and (2) all
neighbors of $u$ have the same set of neighbors, and these neighbors also have the same set of neighbors,
which finally implies that the graph is complete bipartite.
\end{proof}

Lemmas~\ref{lem:BK}, \ref{lem:sputniks}, and \ref{lem:necessary} allow us to conclude with
Theorem~\ref{th:forallMIS}~:

\begin{theorem}
  \label{th:forallMIS}
  All MISs are robust in a graph $G$ if and only if $G$ is complete bipartite or sputnik.
\end{theorem}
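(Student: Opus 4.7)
The theorem is an immediate consequence of the three lemmas already established, so my plan is essentially bookkeeping: argue each direction of the ``if and only if'' by selecting the appropriate lemma.

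For the ``if'' direction, I would split on which of the two graph classes $G$ belongs to. If $G$ is complete bipartite, Lemma~\ref{lem:BK} gives robustness of every MIS. If $G$ is a sputnik, Lemma~\ref{lem:sputniks} gives the same conclusion. Since the hypothesis is a disjunction, either case suffices, and this direction is done.

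For the ``only if'' direction, assume every MIS of $G$ is robust. I would do a simple case split on whether $G$ is a sputnik. If yes, then $G$ already lies in one of the two declared classes and there is nothing more to prove. If not, then the hypotheses of Lemma~\ref{lem:necessary} are satisfied ($G$ is not a sputnik and every MIS is robust), so that lemma forces $G$ to be complete bipartite, again placing $G$ in the target union.

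There is no real obstacle here since all of the work has been done in Lemmas~\ref{lem:BK}, \ref{lem:sputniks}, and \ref{lem:necessary}. The only thing I would be careful about is noting that the two classes, complete bipartite graphs and sputnik graphs, are not disjoint (for example, a star is both), so the characterization is stated as a union rather than a partition; this does not affect the proof but is worth a brief remark for clarity. The proof itself can be written in just a few lines, essentially citing the three lemmas in sequence.
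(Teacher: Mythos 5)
Your proof is correct and matches the paper exactly: the authors likewise derive Theorem~\ref{th:forallMIS} directly from Lemmas~\ref{lem:BK}, \ref{lem:sputniks}, and \ref{lem:necessary}, with the same case split on whether $G$ is a sputnik for the necessary direction. Your remark about the two classes overlapping is a sensible clarification but changes nothing.
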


\section{Characterization of \existsMIS}
\label{sec:existsMIS}
In this section, we turn our attention to the characterization of graphs which {\em admit} a robust MIS ($\existsMIS$).
Unfortunately, this class does not seem to admit a simple characterization in terms of elementary graph properties. We start by discussing some such properties that give a partial characterization, namely we show that bipartiteness is a necessary and sufficient condition for {\em biconnected} graphs to be in \existsMIS. Then, we turn our attention to the general case and present an algorithm that decides if a given graph admits a robust MIS (in polynomial time, and constructively). The yes-instances of this algorithm are an indirect characterization of \existsMIS. 

\subsection{Bipartiteness versus Biconnectivity}
\label{sec:biconnectivity}

Let us first observe that bipartiteness is a sufficient condition for any graph to admit a robust MIS.

\begin{lemma}
  \label{obs:bipartiteness}
  All bipartite graphs admit a robust MIS. 
\end{lemma}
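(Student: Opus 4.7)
The plan is to exhibit an explicit robust MIS in any bipartite graph. Given a connected bipartite graph $G = (V_1 \cup V_2, E)$ with $|V| \ge 2$, I claim that the side $V_1$ (taken, without loss of generality, to be non-empty) is itself a robust MIS; the corner case $|V| = 1$ is trivial since $G$ has no connected spanning subgraph other than itself.

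First I would verify that $V_1$ is an MIS in $G$. Independence is immediate from bipartiteness: no two vertices of $V_1$ are adjacent. For maximality, each vertex $v \in V_2$ has all of its neighbors in $V_1$ (again by bipartiteness), and connectivity of $G$ together with $|V| \ge 2$ forces $v$ to have at least one neighbor, which must therefore lie in $V_1$. Hence no vertex of $V_2$ can be added to $V_1$ without breaking independence.

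Next I would establish robustness. By Observation~\ref{obs:independence} it suffices to show that maximality is preserved in every connected spanning subgraph $H = (V, E_H)$ of $G$. Let $v \in V_2$; since $H$ is connected and spans $V$, $v$ has at least one neighbor in $H$, which is also a neighbor of $v$ in $G$ and thus lies in $V_1$. Consequently $v$ still has a neighbor in $V_1$ within $H$, so $V_1$ remains maximal (and is still independent, by Observation~\ref{obs:independence}).

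The main ``obstacle'' is genuinely minimal: the argument reduces to the fact that bipartiteness forces every edge to cross between $V_1$ and $V_2$, which is precisely the property already exploited in the proof of Lemma~\ref{lem:BK} for complete bipartite graphs; here, completeness is not needed, because connectivity of $H$ is enough to guarantee that each $v \in V_2$ keeps at least one neighbor (necessarily in $V_1$) after arbitrary edge removals. The only subtlety worth flagging is to apply connectivity to $H$ rather than to $G$ at the crucial step.
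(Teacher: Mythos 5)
Your proof is correct and follows essentially the same route as the paper's: take one side of the bipartition as the MIS and use connectivity of the spanning subgraph to guarantee every vertex of the other side keeps a neighbor in the set. You simply spell out the details (independence, maximality in $G$, the corner cases) that the paper's one-line argument leaves implicit.
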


\begin{proof}
  The argument generalizes that of Lemma~\ref{lem:BK} for complete bipartite graphs. Let an MIS be composed of all the
	vertices of the first part. As long as the graph remains connected, every vertex in the second part has a
	neighbor in the first part, and thus in the MIS, which preserves maximality. (Independence is not
	impacted---Observation~\ref{obs:independence}.)
\end{proof}

In fact, bipartiteness happens to be also a necessary condition in the
particular case of biconnected graphs.

\begin{lemma}\label{lem:nonbi}
  If a biconnected graph $G$ is not bipartite, then it admits no robust MIS. 
\end{lemma}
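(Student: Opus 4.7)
The plan is to take an arbitrary MIS $M$ of $G$ and construct a single connected spanning subgraph $H \subseteq G$ in which $M$ fails to be maximal; by Observation~\ref{obs:independence} this is enough to show $M$ is not robust, and since $M$ was arbitrary, $G \notin \existsMIS$.

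First, I would use non-bipartiteness to find two adjacent vertices $u,v$ that both lie outside $M$. Since $G$ is not bipartite, it contains an odd cycle $C$ of length $2k+1$. The set $M \cap V(C)$ is independent in $C$, so it has size at most $k$; hence at least $k+1$ vertices of $C$ lie outside $M$. Because the maximum independent set in $C_{2k+1}$ has size $k$, these $k+1$ vertices cannot be pairwise non-adjacent in $C$, so two of them are consecutive on $C$. Pick such a pair and call them $u$ and $v$.

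Next, I would exploit biconnectivity to suppress all of $u$'s edges to $M$ while preserving connectivity. Since $G$ is biconnected, $u$ is not an articulation point, so $G \setminus \{u\}$ is connected. Let $T$ be a spanning tree of $G \setminus \{u\}$ and set $H = T \cup \{\{u,v\}\}$. Then $H$ is a spanning tree of $G$ (hence a connected spanning subgraph), and in $H$ the vertex $u$ has exactly one neighbor, namely $v$, which does not belong to $M$.

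Finally, I would conclude: $M$ is still independent in $H$ by Observation~\ref{obs:independence}, and $u \notin M$ has no neighbor in $M$ within $H$, so $M \cup \{u\}$ is independent in $H$ and $M$ is not maximal in $H$. Thus $M$ is not a robust MIS. I do not anticipate a genuine obstacle here; the only step worth sanity-checking is the pigeonhole claim that an odd cycle always contains two consecutive non-MIS vertices, which is immediate from the independence number of $C_{2k+1}$.
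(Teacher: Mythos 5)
Your proof is correct and follows essentially the same strategy as the paper's: find two adjacent vertices both excluded from the MIS (the paper gets them directly from the failure of the in/out partition to be a proper 2-coloring, you via an odd-cycle pigeonhole argument), then use biconnectivity to strip $u$ down to its single edge toward $v$, destroying maximality. Your spanning-tree construction just makes explicit the connectivity claim the paper states in one line.
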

\begin{proof}
By contradiction, suppose that $G$ admits a robust MIS. As $G$ is not bipartite, it is not 2-colorable, thus either two neighbor vertices exist which are both included in the set, or two neighbor vertices exist which are both excluded from the set. In the first case, independence is contradicted. In the second case, let $u$ and $v$ be two such vertices. Because the graph is
	biconnected, it is possible to remove all the incident edges to $u$ except $\{u,v\}$ without disconnecting $G$, resulting in a non-maximal set, thus contradicting robustness.
\end{proof}

The argument in the proof of Lemma~\ref{lem:nonbi} will be used several times in the rest of the paper. We refer to it through the concept of a weak vertex.

\begin{definition}[Weak vertex]
  \label{def:weak}
  Let $u$ be a vertex that is not included in an MIS. If $u$ has another neighbor $v$ not being included in the MIS and such that all edges incident to $u$ except $\{u,v\}$ can be simultaneously removed without disconnecting the graph, then $u$ is called a {\em weak} vertex.
\end{definition}

Lemmas~\ref{obs:bipartiteness} and~\ref{lem:nonbi} imply that the intersection of $\existsMIS$ with biconnected graphs is {\em exactly} the set of bipartite biconnected graphs (see Table~\ref{tab:dichotomy}).
\begin{table}[h]
  \centering
  \begin{tabular}{c|c|c}
    &Bipartite&$\neg$ Bipartite\\\hline
    Biconnected&yes& no\\\hline
    $\neg$ Biconnected&yes&possibly\\
  \end{tabular}
  \caption{\label{tab:dichotomy} Existence of a robust MIS with respect to bipartiteness and biconnectedness.}
\end{table}
 An immediate
corollary is the existence of an infinite family of graphs which do not admit a robust MIS, namely all
biconnected non-bipartite graphs (including, for example, the triangle
graph in Figure~\ref{fig:examples} and more generally all the odd cycles).

\subsection{Overview of the algorithm}
\label{sec:algo}


The problem of computing a {\em standard} MIS in a graph $G=(V_G,E_G)$ can be solved by a one-sentence greedy algorithm as follows: for all vertices $v \in V_G$ in arbitrary order, include $v$ in the MIS if none of its neighbors already is.
The problem of computing a {\em robust} MIS (or RMIS) is fundamentally different in two respects: (1) Solutions may or may not exist, and (2) Even if a solution exists, a decision made in some part of the graph can restrict (or invalidate) the feasible choices in remote parts. For example, in the graph of Figure~\ref{fig:non-local}, if vertex $v$ is included in the set, then an RMIS exists if and only if node $u$ is not included in the set. (Any other choice would produce a weak vertex.)

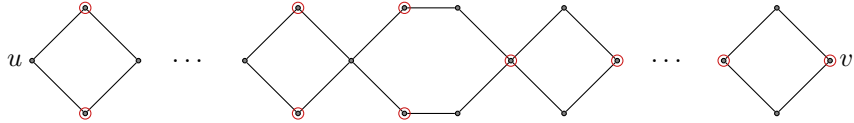
\begin{figure}[h]
  \centering
\begin{tikzpicture}[scale=.7]
  \tikzstyle{every node}=[defnode]
  \path (-1,1) node [] (ma) {};
  \path (0,0) node [] (ba) {};
  \path (0,2) node [] (ha) {};
  \path (1,1) node [] (mb) {};
  \path (3,1) node [] (m0) {};
  \path (4,0) node [] (b0) {};
  \path (4,2) node [] (h0) {};
  \path (5,1) node [] (m1) {};
  \path (6,0) node [] (b3) {};
  \path (6,2) node [] (h3) {};
  \path (7,0) node [] (b3b) {};
  \path (7,2) node [] (h3b) {};
  \path (8,1) node [] (m4) {};
  \path (9,0) node [] (b4) {};
  \path (9,2) node [] (h4) {};
  \path (10,1) node [] (m5) {};
  \path (12,1) node [] (m7) {};
  \path (13,0) node [] (b7) {};
  \path (13,2) node [] (h7) {};
  \path (14,1) node [] (m8) {};

  \tikzstyle{every node}=[]
  \path (ma) node[left] {$u$};
  \path (m8) node[right] {$v$};
  \path (mb) node[right=9pt] {$\dots$};
  \path (m5) node[right=9pt] {$\dots$};

  \draw (ma)--(ba)--(mb)--(ha)--(ma);
  \draw (m0)--(b0)--(m1)--(h0)--(m0);
  \draw (m1)--(b3)--(b3b)--(m4)--(h3b)--(h3)--(m1);
  \draw (m4)--(b4)--(m5)--(h4)--(m4);
  \draw (m7)--(b7)--(m8)--(h7)--(m7);

  \tikzstyle{every node}=[mis]
  \path (ha) node {};
  \path (ba) node {};
  \path (h0) node {};
  \path (b0) node {};
  \path (h3) node {};
  \path (b3) node {};
  \path (m4) node {};
  \path (m5) node {};
  \path (m7) node {};
  \path (m8) node {};
\end{tikzpicture}
\caption{\label{fig:non-local}Non-locality of finding a robust MIS.}
\end{figure}

More generally, deciding whether a graph admits an RMIS requires the identification and propagation of constraints within the graph. To do so, our algorithm relies on a particular type of decomposition of the input graph as a tree of biconnected components called \ABCT. The constraints are first determined at the leaves of this tree, then they are propagated and modified upward, until the root component is itself analysed. At an intermediate node of this tree, either the corresponding subtree admits an RMIS or it does not. If it does, a condition may apply regarding the status of the topmost vertex in the subtree, e.g. the subtree may admit an RMIS at the condition that this vertex is (or is not) in the MIS.

In the following, we always refer to the vertices of the input graph as {\em vertices}, and to those of the decomposition tree as {\em nodes} to avoid confusion.
Note that if $G$ is a tree, then none of its edges can be removed (thus all MISs are robust). As a result, we restrict our attention to the cases that $G$ has at least one biconnected component. We also assume that $G$ is connected, as otherwise each part can be solved separately.

\subsection{The \ABCT decomposition}

An \ABCT, denoted by $T=(V_T,E_T)$, is neither a block-cut tree, nor a bridge tree (see~\cite{Harary} for background), it is a mixture of both. Precisely, an \ABCT is made of four types of nodes $\A, \B, \C$ and $\PV$ defined with respect to the input graph $G=(V_G,E_G)$ as follows:
\begin{itemize}
\item \PV is the set of pendant vertices,
\item \A is the set of articulation points,
\item \B is the set of bridges, 
\item \C is the set of biconnected component.
\end{itemize}
Thus, every node in $\PV$ and $\A$ corresponds to an original {\em vertex} in $V_G$ (the converse is not true), every node in $\B$ corresponds to an {\em edge} in $E_G$ (same remark), and every node in $\C$ corresponds to an entire {\em subgraph} of~$G$ (same remark again). Considering the graph in Figure~\ref{fig:treecomp}, one would obtain $\PV=\{4,5,7,12,20,24\}$, $\A=\{2,3,6,8,10,11,14,15,16,17,18,21,22,28\}$, $\B = \{\{2,3\},\{3,4\},\{4,5\},\{6,7\},\dots\}$, and
$\C = \{\texttt{A},\texttt{B},\texttt{C},\texttt{D},\texttt{E}\}$.

Observe that a same vertex of $V_G$ may correspond to a node $\A$, and at the same time be the endpoint of one or several bridges in $\B$, and at the same time belong to one or several biconnected components in $\C$. Also observe that the endpoints of a bridge are always articulation points or pendant vertices. All these relations are materialized by the set of edges $E_T$ of the \ABCT, defined as

\begin{center}
  $E_T= \{\{a,X\} \in \A\times\C \mid a \in X\}$ $\cup$ $\{\{v,Y\} \in (\A\cup\PV) \times \B \mid v \in Y\}$.
\end{center}

In words, articulation points (seen as nodes) share an edge with the biconnected components they belong to (if any), and articulation points or pendant nodes (seen as nodes) share an edge with the bridges they belong to (if any).
The \ABCT corresponding to the graph of Figure~\ref{fig:treecomp} is shown in Figure~\ref{fig:ex_compo}. The reader is encouraged to spend a few minutes getting acquainted with this construction, which is used frequently in the following.

\begin{figure}[h]
  \centering
  \includegraphics[width=.9\textwidth] {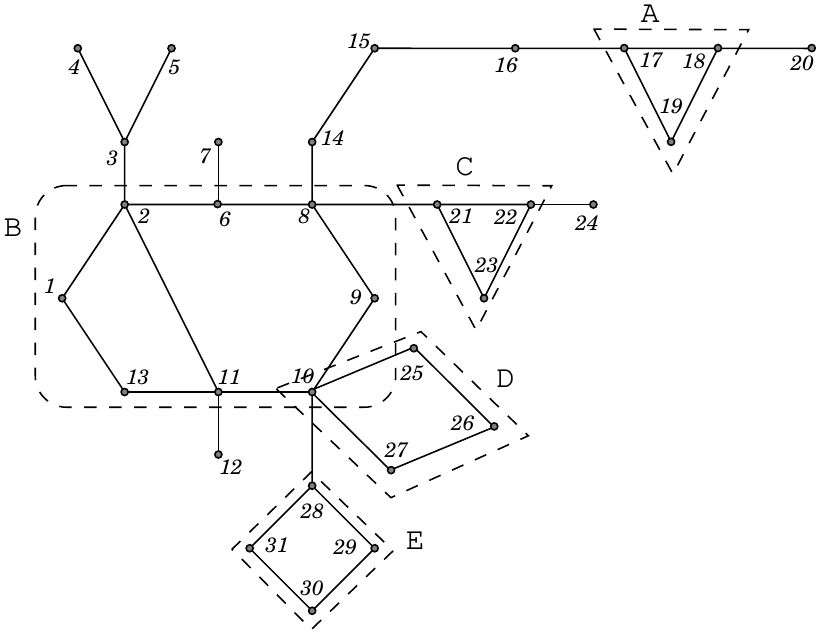}
  \caption{\label{fig:treecomp}
    Example of input graph, seen as a tree of biconnected components.
}
\end{figure}
\begin{figure}[h]
  \centering
  \includegraphics[width=1\textwidth]{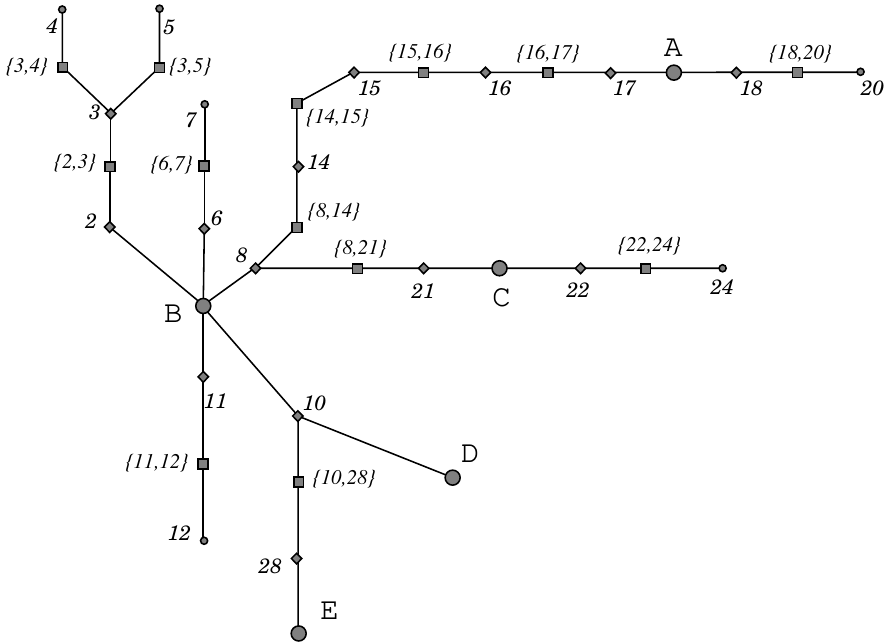}
  \vspace{-2.4cm}
  
  \hfill
  \begin{tikzpicture}
    \path (0,0) node[text width=5cm](legend){
      {\small $\circ$}: pendant vertices ($\in \PV$) \\
      {\scriptsize $\Diamond$}: articulation points ($\in \A$)\\
      {\tiny $\Box$}: bridge edges ($\in \B$)\\
      {\scriptsize $\bigcirc$}: biconnected components ($\in \C$)
    };
  \end{tikzpicture}
\caption{\label{fig:ex_compo}
  \ABCT corresponding to the input graph on Figure~\ref{fig:treecomp}.
}
\end{figure}

\subsection{RMIS constraints in a rooted \ABCT}
\label{sec:constraints}

The algorithm proceeds by propagating constraints from the leaves of the \ABCT to a root $\R$ chosen arbitrarily among the biconnected components. Given a node in $x \in V_T\setminus \R$, we denote by $\parent(x)$ the neighbor of $x \in V_T$ that is one-hop closer to \R, and by $\children(x)$ the set of nodes $\{y \in V_T\setminus\R \mid \parent(y)=x\}$. We extend these definitions to $\descendants()$ in the natural way. Because some nodes in $T$ correspond to real vertices (namely, the nodes in $\A$ and $\PV$) and others do not (the nodes in $\B$ and $\C$), we define a notion of {\em attachment vertex} $v(x)$ of a node $x$ as being either the underlying vertex itself (if $x\in\A\cup\PV$) or the underlying vertex of the parent (if $x\in\B\cup\C$). Indeed, the parent of a node in $\B\cup \C$ is always a node in $\A$, thus it corresponds to a real vertex in $V_G$. Intuitively, the attachment vertex of a node is the highest vertex in this node towards the root $R$.
For instance, in 
Figure~\ref{fig:ex_compo}, assuming that \R is component \texttt{D}, then we have for instance $v(4)=4$, $v(\{3,4\})=3$, $v(3)=3$, $v(\{8,14\})=v(\{8,21\})=v(8)=8$, and $v(\{10,28\})=v(\texttt{B})=v(10)=10$, and indeed, $10$ is the highest underlying vertex in all these nodes (with respect to the root).

Attachment vertices play a important role in the management of constraints, because the constraints induced by $x$ are ultimately aggregated as a membership status for $v(x)$ in the MIS.
The more formal definition relies on induced subgraphs as follows. Every node $x$ induces a subgraph $G_x \subseteq G$ which is exactly $G[x]$ when $x\in\A \cup\B\cup\PV$ (see definitions in Section~\ref{sec:definitions}), or $x$ itself when $x\in\C$. By extension, for any $x\in V_T$, the subtree $T_x=(V_{T_x},E_{T_x})\subseteq T$ whose highest node with respect to $\R$ is $x$ induces a subgraph $G[{T_x}]\subseteq G$ defined as $\cup G_y$ over all $y \in (\descendants(x)\cup x)$.
In other words, $G[{T_x}]$ is the subgraph of $G$ which corresponds to the subtree of $x$.
We define the concept of {\em aerial} version of $G[T_x]$, noted $G_a[T_x]$ as $G[T_x]\cup G[\{v(x),a\}]$ with $a$ an artificial neighbor of $v(x)$ that does not belong to $G[T_x]$. 
For any $x$, the constraints applying to $v(x)$ are encoded using the three following labels (also called tags):
\begin{itemize}
\item \typePI (\typePIdesc): $G[T_x]$ admits an RMIS that includes $v(x)$;
\item \typePO (\typePOdesc): $G[T_x]$ admits an RMIS that does not include $v(x)$;
\item \typePE (\typePEdesc): $G_a[T_x]$ admits an RMIS that does not include $v(x)$.
\end{itemize}

Observe that a node may have several labels at the same time. For example \typePI and \typePO, or \typePI and \typePE. On the other hand, we use \typePO and \typePE in a mutually exclusive way based on the following remark.

\begin{remark}
  \label{rem:POPE}
  If a node $x$ has label $\typePO$, then $G[T_x]$ admits an RMIS in which $v(x)$ is not included but one of its neighbor in $G[T_x]$ is (due to maximality). Thus, no inclusion constraint applies regarding the external neighbor of $v(x)$. As a result, whenever the constraints from different children induce both \typePO and \typePE, \typePO is chosen.
\end{remark}


\subsection{Constraint identification and propagation}
\label{sec:propagation}

In this subsection, we present the rules used for identifying and propagating constraints within the \ABCT. The purpose of the rules is to determine what labels $L(x)$ a node $x$ should take based on the labels of its children in the \ABCT. The validity of the rules is established gradually along their descriptions, based on the following definition of a correct labeling.


\begin{definition}[Correct labeling] \label{def:correct-labeling}
  A node $x$ (in the underlying rooted \ABCT $T$) is correctly labeled if
\begin{itemize}
\item $\typePI\in L(x)$ if and only if $G[T_x]$ admits an RMIS that includes $v(x)$;
\item $\typePO\in L(x)$ if and only if $G[T_x]$ admits an RMIS that does not include $v(x)$; and
\item $\typePE\in L(x)$ if and only if $\typePO\notin L(x)$ and $G_a[T_x]$ admits an RMIS that does not include $v(x)$.
\end{itemize}
\end{definition}

The rules are presented below based on the type of $x$ (namely, $\A,\B,\C,$ and $\PV$). They are illustrated in reference to the input graph $G=(V_G,E_G)$ in Figure~\ref{fig:treecomp} and the corresponding \ABCT $T=(V_T,E_T)$ in Figure~\ref{fig:ex_compo}, with root $\texttt{D}\in \C$.
For simplicity, when discussing about the construction of an MIS, if a vertex $v$ is not included in the MIS and none of its neighbor are included, we say that $v$ is {\em not covered}. The other vertices are {\em covered}. (This terminology is standard in the literature on covering problems.)


\subsubsection{Pendant nodes ($x \in \PV$)}
This case is the easiest, because nodes in $\PV$ have no children and their labels are always the same.


\begin{labrule}\label{rule:P}
  $L(x)$ is set to $\{\typePI,\typePE\}$.
\end{labrule}

In words, $v(x)$ may or may not be included to the MIS, but if it is not, then $v(y)$ should be, where $y$ is the bridge node such that $\parent(x)=y$.

\begin{lemma}\label{lem:labelnodeprmis}
  If $x\in \PV$, then Labeling rule~\ref{rule:P} produces a correct labeling of $x$.
\end{lemma}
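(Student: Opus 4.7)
The plan is to unwind the definition of correct labeling (Definition~\ref{def:correct-labeling}) for the three tags \typePI, \typePO, \typePE in the very degenerate case of a pendant node, where essentially everything collapses to triviality. The key observation I would make first is that since $x\in\PV$ is a leaf of $T$, we have $\descendants(x)=\emptyset$, so $T_x$ consists solely of $x$ and the attachment vertex $v(x)$ is just the underlying pendant vertex itself. Consequently $G[T_x]$ is the one-vertex induced subgraph $G[\{v(x)\}]$, and $G_a[T_x]$ is the single edge $\{v(x),a\}$ for a fresh external neighbor $a$.

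Next I would check each tag against the corresponding equivalence in Definition~\ref{def:correct-labeling}. For \typePI: the singleton $\{v(x)\}$ is obviously a maximal independent set of $G[T_x]$, and is vacuously robust because the single-vertex graph has no edges to remove; hence $G[T_x]$ admits an RMIS containing $v(x)$, so Rule~\ref{rule:P} is right to include \typePI in $L(x)$. For \typePO: the only independent subset of $G[T_x]$ not containing $v(x)$ is the empty set, which is not maximal, so no RMIS of $G[T_x]$ excludes $v(x)$; hence \typePO must be absent from $L(x)$, in agreement with the rule. For \typePE: since $\typePO\notin L(x)$, the side condition in Definition~\ref{def:correct-labeling} is satisfied, and it remains to exhibit an RMIS of $G_a[T_x]$ not containing $v(x)$; but $\{a\}$ is such an RMIS of the single-edge graph $G_a[T_x]$, so \typePE is correctly included.

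Combining these three verifications yields that $L(x)=\{\typePI,\typePE\}$ is precisely the set of tags demanded by Definition~\ref{def:correct-labeling}, which is what the labeling rule prescribes. There is really no hard part here: the pendant case is the base case of the later inductive propagation, and the whole content of the lemma is that the three definitional equivalences degenerate correctly on a single vertex. The only thing to be careful about is the convention that $\typePO$ and $\typePE$ are mutually exclusive (Remark~\ref{rem:POPE}); the pendant case sits on the \typePE side of this dichotomy, and I would just point out explicitly that this is why \typePE, rather than \typePO, appears in $L(x)$.
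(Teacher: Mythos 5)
Your proof is correct and follows essentially the same route as the paper's: verify that the singleton $\{v(x)\}$ gives \typePI, that the single-edge graph $G_a[T_x]$ with $a$ selected gives \typePE, and that \typePO fails because the empty set is not maximal in the one-vertex graph $G[T_x]$. Your version is, if anything, slightly more explicit about checking both directions of the equivalences in Definition~\ref{def:correct-labeling}, but the substance is identical.
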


\begin{proof}
  If $v(x)$ is included in the solution set, then this set (made of $v(x)$ alone) is clearly maximal and independent in $G[T_x]$ (which is also $v(x)$ alone), thus $x$ can be labeled \typePI. If it is not included, but $v(\parent(x))$ is included, then the resulting set is maximal and independent in $G_a[T_x]$ (\ie the graph made of a single edge between $x$ and $\parent(x)$), thus $x$ can be labeled \typePE. In both cases, the considered subgraph of $G$ is itself a tree, thus any valid MIS in it is robust, thus labels \typePI and \typePE are both valid. 
On the other hand, $x$ cannot be labeled \typePO because if $v(x)$ is not in the MIS, then the corresponding set is not maximal in $G[T_x]$ (which is reduced to the single vertex $v(x)$). 
\end{proof}

\subsubsection{Articulation points ($x \in \A$)}

By construction of the \ABCT, if $x\in A$, then all the children of $x$ are in $\B$ or in $\C$ and their attachment vertex is nothing but $v(x)$ itself. Thus, the constraints of each children already relates to $v(x)$, albeit individually. The rule consists of aggregating these constraints as follows.

\begin{labrule}\label{rule:A}If \typePI belongs to the labels of all the children of $x$, then \typePI is added to $L(x)$; if all the children of $x$ contain a label \typePE or \typePO, then two subcases arise: either none of them contains \typePO, in which case \typePE is added to $L(x)$, or at least one contains \typePO, in which case \typePO is added to $L(x)$.
\end{labrule}

The formal description of Labeling rule~\ref{rule:A} is given by Lines~6 to~13 in Algorithm~\vref{algo:labelnode} (the algorithm itself is discussed in a subsequent section). In our example, nodes $3$, $11$, and $18$ are all labeled \typePI and \typePO. 

\begin{lemma}\label{lem:labelnodearmis}
  If $x \in \A$ and all the nodes in $\children(x)$ are correctly labeled, then Labeling rule~\ref{rule:A} produces a correct labeling of $x$.
\end{lemma}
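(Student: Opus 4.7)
The plan is to prove each of the three equivalences in Definition~\ref{def:correct-labeling} using the inductive hypothesis that every child of $x$ is correctly labeled. The underlying structural observation is that $G[T_x]$ is obtained by gluing the subgraphs $G[T_z]$ (for $z\in\children(x)$) at the single shared vertex $v(x)=v(z)$: any connected spanning subgraph $H$ of $G[T_x]$ restricts to a connected spanning subgraph of each $G[T_z]$, and conversely such restrictions can be reassembled into a connected spanning subgraph of $G[T_x]$, because every vertex $w\in V(G[T_z])\setminus\{v(x)\}$ has all of its $G[T_x]$-neighbors inside $V(G[T_z])$. This gluing lemma, and the concomitant absence of cross-edges between subtrees, drives every direction of the argument.

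For \typePI, the forward direction combines, for each child $z$, an RMIS $S_z$ of $G[T_z]$ with $v(z)\in S_z$ into $S=\bigcup_z S_z$. The $S_z$'s only overlap at $v(x)$, so $S$ is independent, and robustness transfers through the gluing: in any connected spanning subgraph of $G[T_x]$, its restriction to $G[T_z]$ is a connected spanning subgraph of $G[T_z]$ in which $S_z$ remains maximal. Conversely, given an RMIS $S$ of $G[T_x]$ containing $v(x)$, each restriction $S\cap V(G[T_z])$ is an RMIS of $G[T_z]$ containing $v(z)$: any connected spanning subgraph of $G[T_z]$ can be extended to one of $G[T_x]$ by choosing connected spanning subgraphs of the other $G[T_{z'}]$ arbitrarily, and the ``no cross-edges'' observation turns maximality in the extension into maximality in the restriction.

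For \typePO, the forward combination requires a distinguished child $y$ with \typePO whose RMIS $S_y$ provides an in-graph neighbor covering $v(x)$; every other child $z$ contributes either its own RMIS of $G[T_z]$ (if $z$ has \typePO) or $S_z\setminus\{a_z\}$, obtained from an RMIS $S_z$ of $G_a[T_z]$ by discarding the artificial vertex (if $z$ has only \typePE). The union is independent and robustly maximal; the coverage of $v(x)$ inside a \typePE-only component, formerly borrowed from $a_z$, is now supplied by $S_y$. Conversely, an RMIS of $G[T_x]$ avoiding $v(x)$ must have $v(x)$ covered by a neighbor sitting in some $G[T_y]$, which yields \typePO at $y$; for every other child $z$, the restriction $S\cap V(G[T_z])$ is either already an RMIS of $G[T_z]$ avoiding $v(z)$ (giving \typePO) or, after adjoining $a_z$, an RMIS of $G_a[T_z]$ avoiding $v(z)$ (giving \typePE).

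The main obstacle will be \typePE, since it is bound by the mutual-exclusion convention of Remark~\ref{rem:POPE}. For the forward direction, if every child has only \typePE, I combine the restrictions $S_z\setminus\{a_z\}$ and adjoin the new artificial vertex $a$ of $G_a[T_x]$; since no child's $S_z$ can cover $v(z)$ from within $G[T_z]$ (else that child would carry \typePO), the vertex $a$ is genuinely needed to cover $v(x)$, and the gluing argument again yields robustness in $G_a[T_x]$. The required exclusion $\typePO\notin L(x)$ then follows from the backward direction of \typePO: were $G[T_x]$ to admit an RMIS avoiding $v(x)$, some child would carry \typePO, contradicting the hypothesis. For the backward direction of \typePE, suppose $\typePE\in L(x)$ and $\typePO\notin L(x)$, and let $S$ witness $\typePE$. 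Then $v(x)$ must be covered by $a$ in $S$, for otherwise $S\setminus\{a\}$ would be an RMIS of $G[T_x]$ avoiding $v(x)$ and witness $\typePO$; thus $a\in S$, and each set $(S\cap V(G[T_z]))\cup\{a_z\}$ is an RMIS of $G_a[T_z]$ avoiding $v(z)$, so $\typePE\in L(z)$. Finally, no child may carry \typePO, for otherwise the \typePO-forward combination would produce an RMIS of $G[T_x]$ avoiding $v(x)$, contradicting $\typePO\notin L(x)$. Hence every child has \typePE only, exactly the trigger of the \typePE branch of the rule.
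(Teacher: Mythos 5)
Your proof is correct and follows essentially the same route as the paper's: exploit the fact that all children of an articulation point share the attachment vertex $v(x)$, combine the children's RMISs (letting a \typePO child supply the neighbor that covers $v(x)$ for the \typePE children) for the positive direction, and restrict an RMIS of $G[T_x]$ (or $G_a[T_x]$) to the subtrees for the converse. Your write-up is in fact somewhat more careful than the paper's, since you make the gluing/restriction observation explicit and prove the completeness direction (property holds $\Rightarrow$ label assigned) directly rather than by restating the soundness direction in contrapositive form.
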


\begin{proof}
  Let us assume that all the nodes in $\children(x)$ are correctly labeled. The proof follows the same cases as the labeling rule. We first prove that the assigned labels are valid, then we prove that they cannot be assigned otherwise.
  \begin{itemize}
  \item If all the nodes in $\children(x)$ contain label \typePI, then for all $y\in\children(x)$, $G[T_y]$ admits a robust MIS that includes $v(y)$. But since $x\in \A$, we have $v(y)=v(x)$, thus $G[T_x]$ admits a robust MIS that includes $v(x)$. 
  \item If all the nodes in $\children(x)$ contain label \typePE or \typePO, then two possible subcases arise:
    \begin{enumerate}
      \item None of them contains \typePO. In this case, all of them contain label \typePE, meaning that for all $y\in \children(x), G[T_y]$ does not admit a robust MIS that excludes $v(y)$, but $G_a[T_y]$ does. Since $v(x)=v(y)$, we have that $G[T_x]$ does not admit a robust MIS that excludes $v(x)$ (a single child with label \typePE would actually be enough here), but $G_a[T_x]$ does (here, we need that {\em all} the children have label \typePE).
      \item At least one contains \typePO. In this case, at least one child $y$ is such that $G[T_y]$ admits an RMIS that excludes $v(y)$. Since $v(y)=v(x)$, at least one of the neighbors of $v(x)$ in $G[T_y]$ can be included in such an MIS, which satisfies the aerial constraints of the $G_a[T_{y'}]$ for all other children $y'$ labeled \typePE (if any). Thus $G[T_x]$ admits an RMIS that excludes $v(x)$ without requiring further aerial constraints above $x$.
    \end{enumerate}
  \end{itemize}

We focus now on the negative direction. 
If $G[T_x]$ does not admit a robust MIS that includes $v(x)$, then because $v(x)=v(y)$, there exists at least one $y\in \children(x)$ such that $G[T_y]$ does not admit a robust MIS that includes $v(y)$. As the labeling of $y$ is correct, $y$ is not labeled \typePI and hence $v(x)$ is not labeled \typePI by the rule.
Similarly, if $G[T_x]$ does not admit a robust MIS that excludes $v(x)$, then because $v(x)=v(y)$, the two cases above are possible. $(i)$ There exists at least one $y\in \children(x)$ such that neither $G[T_y]$ nor $G_a[T_y]$ admit a robust MIS that excludes $v(y)$. As the labeling of $y$ is correct, $y$ is not labeled \typePO or \typePE. $(ii)$ For any $y\in \children(x)$, $G[T_y]$ does not admit a robust MIS that excludes $v(y)$. As the labeling of $y$ is correct, $y$ is not labeled \typePO. In both cases, $v(x)$ cannot be labeled \typePO by the rule.
Finally, if $G_a[T_x]$ does not admit a robust MIS that excludes $v(x)$, then because $v(x)=v(y)$, there exists at least one $y\in \children(x)$ such that $G_a[T_y]$ does not admit a robust MIS that excludes $v(y)$. As the labeling of $y$ is correct, $y$ is not labeled \typePE and hence $v(x)$ is not labeled \typePE by the rule.
\end{proof}

\subsubsection{Bridge nodes ($x \in \B$)}

If $x$ is a bridge node, then it has exactly one child $y$ whose attachment vertex $v(y)$ is a neighbor of $v(x)$. The rule consists of transforming the existing constraints on $v(y)$ into constraints on $v(x)$ as follows.

\begin{labrule}\label{rule:B}
If $L(y)$ contains label \typePI, then label $\typePO$ is added to $L(x)$; if $L(y)$ contains label \typePE, then \typePI is added to $L(x)$; if $L(y)$ contains label \typePO, then \typePI and \typePE are added to $L(x)$. Finally (cleaning), if both \typePE and \typePO have been added to $L(x)$, then \typePE is removed from $L(x)$ (see Remark~\ref{rem:POPE}).
\end{labrule}

The formal description of Labeling rule~\ref{rule:B} is given by Lines~15 to~23 in Algorithm~\ref{algo:labelnode}. In our example, the nodes $\{3,4\}$, $\{3,5\}$, and  $\{18,20\}$ are all tagged \typePI and \typePO. 

\begin{lemma}\label{lem:labelnodebrmis}
  If $x \in \B$ and $y$
  is correctly labeled, then Labeling rule~\ref{rule:B} produces a correct labeling of $x$.
\end{lemma}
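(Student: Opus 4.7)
The plan is to argue case-by-case on the labels present in $L(y)$, leveraging the fact that the edge $\{v(x), v(y)\}$ is a bridge of $G$ and therefore cannot be removed in any connected spanning subgraph of $G[T_x]$. This structural fact yields a clean correspondence: in every such subgraph, $v(x)$ appears as a pendant attached to the rest of $G[T_x]$ only via $v(y)$, so its status in a would-be RMIS is tightly coupled to that of $v(y)$. Moreover, the role played by the aerial vertex of $G_a[T_y]$ above $v(y)$ can be played by $v(x)$ itself when one ascends to $G[T_x]$, and symmetrically the role played by $v(x)$ above $v(y)$ can be emulated by the aerial vertex of $G_a[T_x]$. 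This duality is what drives the three cases of Labeling rule~\ref{rule:B}.

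For soundness I would verify each clause of Definition~\ref{def:correct-labeling} in turn. If $L(y)$ contains \typePE, then an RMIS of $G_a[T_y]$ excluding $v(y)$ lifts to an RMIS of $G[T_x]$ by replacing the aerial vertex with $v(x)$, warranting \typePI at $x$. If $L(y)$ contains \typePO, then an RMIS $S$ of $G[T_y]$ excluding $v(y)$ (with $v(y)$ covered internally) can either be augmented with $v(x)$ to yield an RMIS of $G[T_x]$ containing $v(x)$ (warranting \typePI) or left as is in $G_a[T_x]$, with $v(x)$ covered by its aerial neighbor (warranting \typePE). If $L(y)$ contains \typePI, then an RMIS of $G[T_y]$ containing $v(y)$ already covers $v(x)$ across the bridge, warranting \typePO. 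Robustness in each case is immediate because the bridge is preserved in every connected spanning subgraph of $G[T_x]$, so the guarantees transported from $T_y$ are not weakened. The final cleaning step is justified by Remark~\ref{rem:POPE}: once \typePO is asserted for $x$, the aerial coverage is redundant and \typePE is safely dropped.

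The converse is where I expect to have to be most careful, because one must check that the cleaning step never discards a label that is genuinely needed. If $G[T_x]$ admits an RMIS $S$ containing $v(x)$, then $v(y)\notin S$ by independence, and the restriction of $S$ to $V(G[T_y])$ witnesses either \typePO at $y$ (if $v(y)$ is covered internally) or \typePE at $y$ (if $v(y)$ is covered only by $v(x)$, which corresponds to the aerial vertex of $G_a[T_y]$); in both subcases the rule adds \typePI. If $G[T_x]$ admits an RMIS excluding $v(x)$, then $v(y)\in S$ (its only available cover), and $S\cap V(G[T_y])$ witnesses \typePI at $y$, so the rule adds \typePO. The delicate sub-case is when \typePE is genuinely needed at $x$, i.e.\ $G_a[T_x]$ admits such an RMIS while $G[T_x]$ does not: the aerial vertex of $G_a[T_x]$ must then be the unique cover of $v(x)$, forcing $v(y)\notin S$ and $v(y)$ covered internally, so $y$ carries \typePO and the rule adds \typePE. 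I must check that cleaning does not strip it away, which reduces to showing that $y$ does not carry \typePI in this scenario; but if it did, the first sub-claim would produce an RMIS of $G[T_x]$ excluding $v(x)$, contradicting the hypothesis that \typePE is genuinely needed. Thus cleaning removes \typePE only when \typePO was asserted by genuine internal coverage, and the labeling of $x$ stays correct.
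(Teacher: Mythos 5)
Your proposal is correct and follows essentially the same route as the paper: a case analysis on the labels of $y$, using the fact that the bridge $\{v(x),v(y)\}$ survives in every connected spanning subgraph so that $v(x)$ behaves as a pendant whose status is coupled to $v(y)$, with the aerial vertex of $G_a[T_y]$ identified with $v(x)$ and vice versa. Your treatment of the cleaning step is in fact slightly more explicit than the paper's (which delegates it to Remark~\ref{rem:POPE}), but the substance is the same.
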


\begin{proof}
  Let $x\in\B$ and $\children(x)=\{y\}$.
  \begin{itemize}
  \item If $L(y)$ contains label \typePI, then $G[T_y]$ admits a robust MIS including $v(y)$. Because $v(y)$ is included, the MIS is also valid in $G[T_x]=G[T_y]\cup G[\{x,y\}]$, and because this edge is a bridge (\ie it is not removable), it remains robust in $G[T_x]$, thus $x$ can be labeled \typePO.
  \item If $L(y)$ contains label \typePE, then $G_a[T_y]=G[T_x]$ admits a robust MIS including $v(x)$, thus $x$ can be labeled \typePI.
  \item If $L(y)$ contains label PO, then $G[T_y]$ admits a robust MIS that excludes $v(y)$.\\
    - As $v(y)$ is the only neighbor of $v(x)$ in $G[T_x]$, adding $v(x)$ to such an MIS would produce a valid MIS in $G[T_x]=G[T_y]\cup G[\{x,y\}]$, and because $\{x,y\}$ is a bridge, the resulting MIS would remain robust. Thus $x$ can be labeled \typePI.\\
    - As $v(y)$ already has a neighbor included in the MIS in $G[T_y]$, not including $v(x)$ in the MIS means that $v(x)$ is the only uncovered vertex in $G[T_x]$, which can be remedied by including to the MIS an aerial neighbor in $G_a[T_x]$. Thus, $x$ can be labeled \typePE.
  \end{itemize}

We focus now on the negative direction. 
If $G[T_x]$ does not admit a robust MIS that includes $v(x)$, then because $\{x,y\}$ is not a removable edge, $G[T_y]$ does not admit a robust MIS that excludes $v(y)$ and $G_a[T_y]=G[T_x]$ does not admit a robust MIS that excludes $v(y)$ (and includes $a$). As the labeling of $y$ is correct, $y$ is not labeled \typePI or \typePE and hence $v(x)$ is not labeled \typePI by the rule.
Similarly, if $G[T_x]$ does not admit a robust MIS that excludes $v(x)$, then because $\{x,y\}$ is not a removable edge, $G[T_y]$ does not admit a robust MIS that includes $v(y)$. As the labeling of $y$ is correct, $y$ is not labeled \typePI and hence $v(x)$ cannot be labeled \typePO.
Finally, if $G_a[T_x]$ does not admit a robust MIS that excludes $v(x)$, then because $\{x,y\}$ is not a removable edge, $G[T_y]$ does not admit a robust MIS that excludes $v(y)$. As the labeling of $y$ is correct, $y$ is not labeled \typePO and hence $v(x)$ is not labeled \typePE by the rule.
\end{proof}

\begin{algorithm}
\caption{\textbf{LabelNode}$(x)$}\label{algo:labelnode}
\begin{footnotesize}
\textbf{Parameters:} A node $x$ whose children in $T$ are already labeled\\
\textbf{Return:} A correct labeling of $x$
\begin{tabbing}
XX: \= XX \= XX \= XX \= XX \= XX \= XX \= XX \=\kill
01: \> $L(x) \gets \emptyset$\\
02:\\
03: \> \textbf{case} $x \in \PV$:\label{line:P-begin}\\
04: \> \> $L(x) \gets \{\typePI,\typePE\}$\label{line:P-end}\\
05:\\
06: \> \textbf{case} $x \in \A$:\\
07: \> \> \textbf{if} $\forall c\in \children(x), \typePI \in L(c)$ \textbf{then}\\
08: \> \> \> $L(x)\gets L(x)\cup\typePI$\\

09: \> \> \textbf{if} $\forall c\in \children(x), \typePO \in L(c)$ or $\typePE \in L(c)$ \textbf{then}\\
10: \> \> \> \textbf{if} $\exists c\in \children(x), \typePO \in L(c)$ \textbf{then}\\
11: \> \> \> \> $L(x)\gets L(x)\cup\typePO$\\
12: \> \> \> \textbf{else}\\
13: \> \> \> \> $L(x)\gets L(x)\cup\typePE$\\

14:\\
15: \> \textbf{case} $x \in \B$: (with child side $c$)\\
16: \> \> \textbf{if} $\typePI \in L(c)$ \textbf{then}\\ 
17: \> \> \> $L(x)\gets L(x)\cup \typePO$\\
18: \> \> \textbf{if} $\typePE \in L(c)$ \textbf{then}\\
19: \> \> \> $L(x)\gets L(x)\cup \typePI$\\
20: \> \> \textbf{if} $\typePO \in L(c)$ \textbf{then}\\
21: \> \> \> $L(x)\gets L(x)\cup \{\typePI,\typePE\}$\\
22: \> \> \textbf{if} $\typePO \in L(x)$ and $\typePE \in L(x)$ \textbf{then}\\
23: \> \> \> $L(x)\gets L(x)\setminus \typePE$\\

24:\\
25: \> \textbf{case} $x \in \C$:\\
26: \> \> \textbf{if} \textbf{isSatisfiable}$(x,{\tt in})$ \textbf{then}\\
27: \> \> \> $L(x)\gets L(x)\cup \typePI$\\
28: \> \> \textbf{if} \textbf{isSatisfiable}$(x,{\tt out})$ \textbf{then}\\
29: \> \> \> $L(x)\gets L(x)\cup \typePO$\\
30: \> \> \textbf{else}\\
31: \> \> \> $L(\parent(x))\gets \typePO$\\
32: \> \> \> \textbf{if} \textbf{isSatisfiable}$(x,{\tt out})$ \textbf{then}\\
33: \> \> \> \> $L(x)\gets L(x)\cup \typePE$\\
34: \> \> \> $L(\parent(x))\gets \emptyset$\\
35:\\
36: \> {\bf return} $L(x)$
\end{tabbing}\vspace{-8pt}
\end{footnotesize}
\end{algorithm}

\subsubsection{Biconnected components ($x \in \C$)}
\label{sec:biconnected}

As discussed in Section~\ref{sec:biconnectivity}, when $G$ itself is a biconnected component, an RMIS exists if and only if $G$ is bipartite (Lemmas~\ref{obs:bipartiteness} and~\ref{lem:nonbi}). When a biconnected component $x$ is a node in the \ABCT $T$, the situation is more complex due to the existence of constraints from neighbors in $T$. In particular, an RMIS satisfying these constraints may or may not exist. 

Let $G_x$ be the subgraph of $G$ induced by node $x$ (in fact, coinciding with $x$).
By construction of the \ABCT, every neighbor of $x$ in $T$ (whether children or parent) corresponds to an articulation point $y$ in $\A$ such that $v(y) \in G_x$. Thus, potential constraints on these nodes can be seen as applying to vertices {\em inside} $G_x$. This nice feature allows us to focus on finding an RMIS in $G_x$ without worrying about the entire subtree $G[T_x]$. Indeed, if an RMIS $M$ satisfying the constraints exists in $G_x$ and if every child $y$ is correctly labelled, then an RMIS $M_y$ must exist in every $G[T_y]$ such that $v(y)$ has the same status (included or excluded) in $M_y$ and in $M$. As every $y$ is an articulation point, the union of all these RMISs forms a robust MIS in $G[T_x]$.

Based on the above observation, we now focus on the simpler problem of finding an RMIS in $G_x$ that satisfies potential constraints on its articulation points. In addition, we add an input parameter to the problem that forces the status of the attachment vertex $v(x)$ in the RMIS, so that (for instance) $x$ receives label \typePI if an RMIS is found that includes $v(x)$. (The exact labeling rule using will be described later on.) 

\paragraph{\ref{sec:biconnected}.1 ~Solving $G_x$}~\smallskip

Here, we consider the following problem. Given a biconnected component $G_x$, some vertices of which are articulation points in $G$ (corresponding to nodes in $\A$); given constraints on these articulations points, encoded into the labeling function $L$; given a parameter ${\tt in, out,}$ or ${\tt none}$; the goal is to determine if there exists a robust MIS in $G_x$ in which $v(x)$ is included in the RMIS if the parameter is ${\tt in}$, not included if the parameter is ${\tt out}$, and unrestricted if the parameter is ${\tt none}$, and such that the inclusion status of every constrained articulation point $y$ matches at least one of its labels. Namely, if $L(y) \ne \emptyset$ (\ie $y$ is constrained), then $v(y)$ can be included to the RMIS only if $L(y)$ contains \typePI; and it can be excluded from the RMIS only if $L(y)$ contains \typePO or \typePE.


\paragraph{Bipartiteness {\it vs.} \typePO labels}
As already explained, if a biconnected graph is not bipartite, then weak vertices must exist (see Lemma~\ref{lem:nonbi} and Definition~\ref{def:weak}). Bipartiteness still plays a central role in the case of biconnected components, but the existence of \typePO labels makes it more subtle, as explained in the following remark.

\begin{remark}
  \label{rem:PO}
Let $y$ be an articulation point such that $v(y) \in G_x$ and $y$ contains label \typePO, then there exists an RMIS in $G[T_y]$ that does not include $v(y)$ itself but includes one (or possibly several) neighbors in $G[T_y]$ that prevent $v(y)$ from being a weak vertex.
\end{remark}

As a result, the vertices admitting a label \typePO are not subject to the same bipartiteness constraint as the others vertices.

\paragraph{The procedure} The procedure is formally described in Algorithm~\ref{algo:testrmis}, to which the reader is referred for details, and its correctness is proved subsequently. Let $E^\typePO$ be the set of edges in $G_x$ such that both endpoints contain label \typePO, and let $X$ be the subgraph $G_x \setminus E^\typePO$, possibly resulting into several connected components $X_1, ..., X_k$.
If $X$ is not bipartite, then the algorithm rejects, because this means that at least one non-PO vertex must be {\em weak}.  Otherwise, the component may possibly admit an RMIS if the combination of constraints induced by all the vertices is satisfiable. The rest of the procedure consists of encoding these constraints into a 2-SAT formula such that $G_x$ admits an RMIS (with given status for $v(x)$, the attachment vertex), if and only if the formula is satisfiable. The formula is built as follows.
For each connected component $X_i$, a SAT variable $x_i$ is created. As $X_i$ is bipartite, every vertex $v$ of one part receives label $\ell(v) \gets x_i$ and every vertex $v$ of the other part receives a label $\ell(v) \gets \neg x_i$. (Intuitively, if the eventual formula is satisfiable with $x_i=true$, then the vertices in the first part are included in the RMIS; if it is satisfiable with $x_i=false$, then the vertices of the second part are included.) Then, the existing constraints of articulation points are incorporated; namely, if an articulation point $y \in \A$ contains {\em only} label \typePI, then $\ell(v(y))$ is set to {\tt true}; if it contains only \typePO or only \typePE, then $\ell(v(y))$ is set to {\tt false}; if it contains both options (\ie \typePI\typePO or \typePI\typePE), then no constraint is added. Finally, although the edges of $E^\typePO$ induce no constraints for bipartiteness, we must still make sure that their endpoints are not both included to the RMIS (for independence), thus if $\{u,v\} \in E^\typePO$, then the clause $\neg\ell(u)\vee\neg\ell(v)$ is added.

 


\begin{algorithm}
\caption{\textbf{isSatisfiable}$(x,flag)$}\label{algo:testrmis}
\begin{footnotesize}
\textbf{Parameters:} A node $x\in \C$ and a $flag \in\{{\tt in,out,none}\}$ that forces the status of $v(x)$\\
\textbf{Return:} ${\tt true}$ if a suitable set of $x$ exists, ${\tt false}$ otherwise
\begin{tabbing}
XX: \= XX \= XX \= XX \= XX \= XX \= XX \= XX \=\kill
01: \> Let $E^{\tt PO}$ be the set of edges $\{u,v\}$ of $G_x$ such that $\typePO \in L(u)$ and $\typePO \in L(v)$\\
02: \> Let $X=G_x \setminus E^{\tt PO}$\\
03: \> \textbf{if} $X$ is not bipartite \textbf{then}\\
04: \> \> {\bf return} false\\
05: \> Let $X_1,\ldots,X_k$ be the (bipartite) connected components in $X$\\
06: \> Let $F$ be an empty $2$-SAT expression on a set $\{x_1,\ldots,x_k\}$ of boolean variables\\
07: \> \textbf{foreach} $i$ in $1..k$ \textbf{do}\\
08: \> \> Assign to each vertex $v$ in $X_i$ a label $\ell(v)\in\{x_i,\neg x_i\}$ such that neighbors have different labels\\
09: \> \textbf{foreach} $a\in \A | v(a) \in G_x$ \textbf{do}\\
10: \> \> \textbf{if} $L(a) = \{\typePI\}$ \textbf{then}\\
11: \> \> \> $F\gets F\wedge(\ell(v(a)))$\\
12: \> \> \textbf{if} $L(a) = \{\typePO\}$ or $L(a) = \{\typePE\}$ \textbf{then}\\
13: \> \> \> $F\gets F\wedge(\neg\ell(v(a)))$\\
14: \> \textbf{foreach} $\{u,v\}\in E^{\tt PO}$ \textbf{do}\\
15: \> \> $F\gets F\wedge(\neg\ell(u)\vee\neg\ell(v))$\\
16: \> \textbf{if} $flag = {\tt in}$ \textbf{then}\\
17: \> \> $F\gets F\wedge(\ell(v(x)))$\\
18: \> \textbf{else if} $flag = {\tt out}$ \textbf{then}\\
19: \> \> $F\gets F\wedge(\neg\ell(v(x)))$\\
20: \> \textbf{return} whether $F$ is satisfiable (using an external 2-SAT solver)\\
\end{tabbing}\vspace{-8pt}
\end{footnotesize}
\end{algorithm}

\paragraph{Correctness}
\newcommand{\GP}{\ensuremath{G_x^+}\xspace}

To formalize the properties of the procedure ${\tt isSatisfiable}$, we need to introduce some definitions.
Let \GP be the graph built from $G_x$ by adding a small gadget to every $v(y) \in G_x$ when $y$ is constrained (\ie $L(y)$ is non-empty). The gadget consists of a path of length $2$ incident to $v(y)$ through virtual vertices $v'(y)$ and $v''(y)$ (see Figure~\ref{fig:GP} for an intuition). This construction is not used in the procedure itself, only in the proof.
\begin{figure}[h]
  \centering
  \begin{tikzpicture}
    \tikzstyle{every node}=[defnode]
    \path (1,2) node (v2){};
    \path (2,2) node (v6){};
    \path (3,2) node (v8){};
    \path (0.3,1) node (v1){};
    \path (3.7,1) node (v9){};
    \path (1,0) node (v13){};
    \path (2,0) node (v11){};
    \path (3,0) node (v10){};
    
    \draw (v2)+(0,.5)--(v2);
    \draw (v6)+(0,.5)--(v6);
    \draw (v8)+(0,.5)--(v8);
    \draw (v8)+(1,0)--(v8);
    \draw (v11)+(0,-.5)--(v11);
    \draw (v1)--(v2)--(v6)--(v8)--(v9)--(v10)--(v11)--(v13)--(v1);
    \draw (v2)--(v11);

    \tikzstyle{every node}=[font=\scriptsize, inner sep=2pt]
    \path (v1) node[above left] {1};
    \path (v2) node[above right] {2};
    \path (v6) node[above right] {6};
    \path (v8) node[above left] {8};
    \path (v9) node[left=2pt] {9};
    \path (v10) node[below=-1pt, xshift=8pt] {$10 =$\small\,$v(x)$};
    \path (v11) node[below right] {11};
    \path (v13) node[below right] {13};

    \tikzstyle{every node}=[font=\footnotesize, inner sep=2pt]
    \path (v2) node[left] {\typePI\hspace{-1.5pt}\typePO};
    \path (v6) node[below] {\typePI\hspace{-1.5pt}\typePO};
    \path (v11) node[above=1pt,xshift=9pt] {\typePI\hspace{-1.5pt}\typePO};
    \path (v8) node[below left,xshift=2pt] {\typePI};
    \draw[dashed,thick,gray,rounded corners] (0,2.5) rectangle (4,-.5);
    \path (0,2.2) node[left] {\large $x=$ {\tt B}};
    \path (5,1) node {\LARGE $\rightarrow$};
    \path (0,-.85) coordinate (bidon);
  \end{tikzpicture}
  \hspace{10pt}
  \begin{tikzpicture}
    \tikzstyle{every node}=[defnode]
    \path (1,2) node (v2){};
    \path (1,2.4) node[gray] (v2b){};
    \path (1,2.8) node[gray] (v2c){};
    \path (2,2) node (v6){};
    \path (2,2.4) node[gray] (v6b){};
    \path (2,2.8) node[gray] (v6c){};
    \path (3,2) node (v8){};
    \path (3,2.4) node[gray] (v8b){};
    \path (3,2.8) node[gray] (v8c){};
    \path (0.3,1) node (v1){};
    \path (3.7,1) node (v9){};
    \path (1,0) node (v13){};
    \path (2,0) node (v11){};
    \path (2,-.4) node[gray] (v11b){};
    \path (2,-.8) node[gray] (v11c){};
    \path (3,0) node (v10){};
    
    \tikzstyle{every node}=[circle, draw, red, inner sep=2pt]
    \path (v8) node {};
    \path (v10) node {};
    \path (v13) node {};
    \path (v2) node {};
    \path (v2c) node {};
    \path (v6b) node {};
    \path (v8c) node {};
    \path (v11b) node {};

    \draw (v2)--(v2b)--(v2c);
    \draw (v6)--(v6b)--(v6c);
    \draw (v8)--(v8b)--(v8c);
    \draw (v11)--(v11b)--(v11c);
    \draw (v1)--(v2)--(v6)--(v8)--(v9)--(v10)--(v11)--(v13)--(v1);
    \draw (v2)--(v11);

    \tikzstyle{every node}=[font=\scriptsize, inner sep=2pt]
    \path (v1) node[above left] {1};
    \path (v2) node[above right] {2};
    \path (v6) node[above right] {6};
    \path (v8) node[above left] {8};
    \path (v9) node[left=2pt] {9};
    \path (v10) node[below=-1pt, xshift=8pt] {$10 =$ \normalsize $v(x)$};
    \path (v11) node[below right] {11};
    \path (v13) node[below right] {13};

    \tikzstyle{every node}=[font=\footnotesize, inner sep=2pt]
  \end{tikzpicture}
  \hspace{10pt}
  \caption{\label{fig:GP}Example of component $G_x$ (left)--$x$ corresponds to the component {\tt B} in
  Figure~\ref{fig:treecomp}. With $v(x)=10$, a chain of two virtual nodes is added to each constrained node $v(y) \in \{2,6,8,11\}$
  that gives the corresponding $G_x^+$ (right). The MIS shown on $G_x^+$ corresponds to the set $S$ used in the proof 
  of Lemma~\ref{lem:testrmis1} with $v(x)$ included.}
\end{figure}
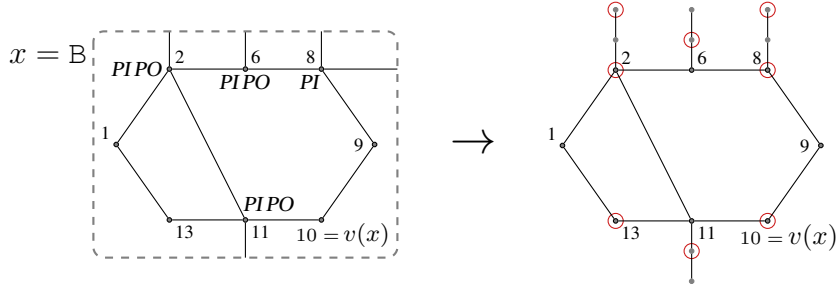
We say that a set of vertices $S$ is a {\em suitable RMIS} of \GP if and only if $S$ is an RMIS of \GP such that, for every $y\in G_x$ with at least one label in $L(y)$, we have:\\
- $v(y)$ in $S$ only if \typePI $\in L(y)$;\\
- $v(y)$ not in $S$ only if \typePO or \typePE appears in $L(y)$; and\\
- $v(y)$ not in $S$, $v'(y)$ in $S$, and $v''(y)$ not in $S$ only if PO appears in $L(y)$.

The idea is that, for every $y$, the graph induced by the path $\{v(y),v'(y),v''(y)\}$ replaces $G[T_y]$ in the management of constraints so that a suitable RMIS of \GP exists if and only if an RMIS of $G[T_x]$ exists. We now prove the main lemma.

\begin{lemma}\label{lem:testrmis1}
Given a biconnected component $x\in\C$ with constraints on some articulation points, ${\tt isSatisfiable}(x,{\tt in})$ returns true if and only if a suitable RMIS of $\GP$ including $v(x)$ exist; ${\tt isSatisfiable}(x,{\tt out})$ returns true if and only if a suitable RMIS of $\GP$ excluding $v(x)$ exist;  ${\tt isSatisfiable}(x,{\tt none})$ returns true if and only if a suitable RMIS of $\GP$ exists, irrespective of $v(x)$.
\end{lemma}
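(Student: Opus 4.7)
The plan is to prove the biconditional by establishing a correspondence between satisfying truth assignments of the 2-SAT formula $F$ built by \textbf{isSatisfiable} and suitable RMISs of $G_x^+$ respecting the flag; the three flag values are handled uniformly, since the flag only appends at most one unit clause on the literal $\ell(v(x))$. I would split the argument into two directions and add at the end a short separate check that the three flag cases follow as instances.

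For the direction from a satisfying assignment of $F$ to a suitable RMIS, I would first define $S \cap V(G_x) := \{v \in V(G_x) : \ell(v) \text{ evaluates to true}\}$, and then extend $S$ over the gadget of each constrained articulation point $y$ as follows: put $v''(y) \in S$ when $v(y) \in S$; put $v'(y) \in S$ when $v(y) \notin S$ and $\typePO \in L(y)$; put $v''(y) \in S$ when $v(y) \notin S$ and $\typePO \notin L(y)$ (so $\typePE \in L(y)$). Independence inside $G_x$ follows from the bipartite literal assignment on each $X_i$ (neighbours carry opposite literals) together with the PO-PO clauses of line~15, while independence inside the gadgets is immediate from the case distinction. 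For maximality and robustness, the crucial remark is that the gadget edges are bridges of $G_x^+$, so I only need every out-vertex of $G_x$ to remain covered in every connected spanning subgraph of $G_x$. The key observation is that every edge of $X$ has at least one non-\typePO endpoint, so every non-\typePO out-vertex lies in some bipartite $X_i$ with all its $G_x$-neighbours on the opposite side, hence true-labelled and in $S$; connectedness of the spanning subgraph preserves at least one incident edge per vertex, so coverage is maintained. A \typePO out-vertex that becomes isolated in $X$ (all its $G_x$-neighbours sit behind PO-PO edges that may all be out) is covered by $v'(y)$ in its gadget, a configuration licensed by bullet~3 of the suitability definition precisely because $\typePO \in L(y)$. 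Finally, the articulation-point constraints (lines~10--13) and the flag constraint (lines~16--19) match the definition of $S$ by direct inspection.

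For the direction from a suitable RMIS $S$ of $G_x^+$ to a satisfying assignment, I would set each $x_i$ to the truth value that makes the side of the bipartition of $X_i$ contained in $S$ correspond to true literals. The core structural claim to prove is that $S \cap V(X_i)$ is exactly one full side of the bipartition of $X_i$. The weak-vertex argument of Lemma~\ref{lem:nonbi} inside the biconnected $G_x$ forces every non-\typePO out-vertex to have all its $G_x$-neighbours in $S$: otherwise, by biconnectedness, one can remove all its other incident edges and isolate it with an out-neighbour without disconnecting $G_x$, contradicting robustness (and no external rescue is possible since the vertex carries no \typePO label). Since every edge of $X$ has at least one non-\typePO endpoint, a short case analysis on each edge of $X_i$ shows that exactly one endpoint lies in $S$; combined with the connectedness of $X_i$, this forces $S \cap V(X_i)$ to coincide with one bipartition class. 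Once this structural claim is in hand, the clauses of $F$ are verified straightforwardly: the unit clauses of lines~10--13 from bullets~1--2 of the suitability definition, the binary clauses of line~15 from the independence of $S$, and the flag clause from the forced status of $v(x)$ in $S$.

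The main obstacle is the structural claim above, that $S \cap V(X_i)$ is exactly one side of the bipartition of $X_i$. The delicate point is that \typePO-labelled vertices enjoy extra freedom (they may be out without a neighbour in $S$ inside $G_x$), so it is not a~priori clear that this freedom cannot break the rigid alternation inside $X_i$ and yield more elaborate valid configurations. The resolution is that any edge between two \typePO vertices belongs to $E^{\typePO}$ by construction and is removed from $X$, so every surviving edge of $X_i$ has at least one non-\typePO endpoint on which the weak-vertex argument bites; the \typePO freedom is therefore confined to PO-PO edges, where it is accurately modelled by the binary clauses of line~15, and the alternation inside $X_i$ is preserved.
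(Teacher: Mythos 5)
Your proposal is correct and follows essentially the same route as the paper's proof: the same correspondence between satisfying assignments of $F$ and suitable RMISs of \GP, the same use of bipartiteness of $G_x\setminus E^{\typePO}$ and of the weak-vertex/biconnectedness argument, with only a cosmetic difference in the gadget-extension rule (you key the choice of $v'(y)$ versus $v''(y)$ on \typePO-membership where the paper keys it on weakness, both of which work). If anything, you make explicit a step the paper leaves implicit in the converse direction, namely that a suitable RMIS must intersect each $X_i$ in exactly one side of its bipartition.
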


\begin{proof}
Let $x$ be the considered node in the \ABCT and $G_x$ the corresponding component. Let $\GP$ be the augmented version of $G_x$ as described above. First observe that, if $G_x \setminus E^\typePO$ is bipartite, then so is $\GP \setminus E^\typePO$, because the adjunction of such paths cannot affect bipartiteness.
In the following, we first prove the direction that (1) if the procedure returns true, then the corresponding type of suitable RMIS exists; then (2) if the given type of suitable RMIS exists, then the procedure with corresponding parameters must return true.

\begin{enumerate}[leftmargin=15pt]
\item[(1)] If $F$ is satisfiable, define $S$ as the set of vertices induced by a positive assignment, \ie $\{v \in G_x | \ell(v)=true\}$. Now, add to $S$ some of the virtual vertices as follows: if $v(y)$ is in $S$, add only $v''(y)$; if $v(y)$ is not in $S$ and is weak in $G_x$, add only $v'(y)$; if $v(y)$ is not in $S$ and is not weak in $G_x$, add only $v''(y)$. Lines 16 to 19 ensure that $v(x)$ is included in the MIS if $flag={\tt in}$ and that it is excluded if $flag={\tt out}$ (if $flag={\tt none}$, no constraint applies). We now prove that $S$ is independent, maximal, robust, and suitable in $G^+$.
  \smallskip

\noindent{\em Independence:} Let $e=\{u,v\}$ be an edge of \GP. If $e$ is an edge from one of the extra paths, $u$ and $v$ cannot be both in $S$ by construction. Otherwise (\ie $e$ is in $G_x$), either $e$ belongs to $E^{\typePO}$, or it does not. If it does, then the clause introduced on Line 15 ensures that $u$ and $v$ cannot be both in the MIS. If it does not, then it belongs to $\GP\setminus E^{\typePO}$, which is bipartite, then Line 08 (and the greedy process for extra paths) ensures that $u$ and $v$ cannot be both in the MIS.
  \smallskip

\noindent{\em Maximality and robustness:} 
Recall that independence is not affected by the removal of edges (Observation~\ref{obs:independence}), thus robustness means preserving maximality. Let $u$ be a vertex of $G_x^+$ that does not belong to $S$. Either $u$ belongs to an edge in $E^{\typePO}$ or it does not. If it does, then by Remark~\ref{rem:PO} it has a neighbor in the MIS (maximality) and it cannot be a weak vertex (robustness).
If it does not, then {\em all} of its neighbors in $G_x^+$ are in the set because $G_x \setminus E^{\typePO}$ is bipartite (and the extension of $S$ in the extra paths keep alternating the inclusion status), thus so long as $G_x^+$ remains connected, $u$ cannot be added to $S$ (maximality and robustness). 
  \smallskip

\noindent{\em Suitability:} 
For every $v(y)\in G_x$ such that $L(y) \ne \emptyset$. Lines 12 and 13 ensure that $v(y)\in S$ only if \typePI appears in $L(y)$. Lines 10 and 11 ensure that $v(y)\notin S$ only if \typePO or \typePE appear in $L(y)$.
If $v'(y) \in S$ and $v''(y) \notin S$, then by construction $v(y) \notin S$ and it is weak in $G_x$. Line 08 implies that at least one adjacent edge to $v(y)$ in $G_x$ is in $E^\typePO$ and then \typePE does not appear in $L(y)$, implying that $PO$ does.
  \smallskip


\item[(2)] Let $M$ be a suitable RMIS of $\GP$ in which the status of $v(x)$ is ${\tt in}$ (resp ${\tt out}$). Let $X = G_x\setminus E^{\typePO}$. If $X$ is not bipartite, then it contains a weak vertex, which contradicts the robustness of $M$. Thus, $X$ must be bipartite. In general, $X$ may be made of several connected components $X_1, ..., X_k$.
A satisfying assignment can then be constructed from $M$ as follows. For all $i$ ranging from $1$ to $k$, chose the value of $x_i$ so that $\forall v\in G_x, \ell(v)=true\Leftrightarrow v\in M$. This assignment satisfies the mutual exclusivity constraint in Line 15 (with respect to edges in $E^\typePO$), as otherwise $M$ would not be independent, and it also satisfies the constraint added with respect to $L(y)$ (Lines 09 to 13), by suitability of $M$. Finally, due to Lines 17 and 19, the assignment must include (respectively exclude) $v(x)$ if the input flag is ${\tt in}$ (resp. ${\tt out}$). Thus a satisfiable assignment corresponding to $M$ must exists.\vspace*{-20pt}
\end{enumerate}
\end{proof}
%


\paragraph{\ref{sec:biconnected}.2 ~The Labeling Rule}~\smallskip

Let us now return to the labeling of node $x \in \C$ in the \ABCT, assuming that all $y$ in $\children(x)$ are correctly labeled. The goal here is to label $x$ as per the possibilities, namely to assign label \typePI if an RMIS exists in $G[T_x]$ that includes $v(x)$; to assign label \typePO if an RMIS exists in $G[T_x]$ that excludes $v(x)$; and, in case the latter does not, to assign \typePE if an RMIS exists that excludes $v(x)$ provided that an external neighbor of it is subsequently included.

We need to distinguish here between the case that $x$ is the root component $R$ of the \ABCT, or just an internal node. If $x$ is the root, then it has no attachment point and the only thing that matters is whether an RMIS exists or not. As such, $x$ does not receive a labeling, and is instead treated directly from the main algorithm (described in the next section). Thus, we focus on how the above procedure can be used to label $x$ when $x$ is an internal node of the \ABCT.

The first two tests are realized by fixing the status of $v(x)$ as a parameter when calling ${\tt isSatisfiable()}$. A first call is made, setting this parameter to {\tt in}, then a second call is made, setting this parameter to ${\tt out}$. If the second call fails (no such RMIS exist), then a third call is made with a different strategy. This strategy relies on a trick using label \typePO. Remark~\ref{rem:PO} implies that whenever a child $y$ of $x$ is labeled \typePO, then $v(y)$ (itself in $x$) does not need to have a neighbor {\em within} $x$ that is included in the RMIS. As a result, testing whether $G[T_x]$ admits an RMIS if an external neighbor of $v(x)$ is subsequently added (\ie label \typePE) can be done by pretending that the parent $z$ of $x$, whose attachment vertex $v(z)=v(x)$ is in $x$, itself has an external neighbor in the MIS. Thus, just like the other children labeled \typePO, an artificial \typePO label is assigned to $\parent(x)$, resulting in the forced non-selection of $v(x)$ to the MIS and the adjunction of an extra path to $v(x)$ (whose neighbor in the path {\em will} be included). The existence of an RMIS satisfying this configuration implies that $x$ can be labeled \typePE. The resulting labeling rule is as follows.

\begin{labrule}\label{rule:C}
  If ${\tt isSatisfiable()}$ returns true with parameter {\tt in}, then \typePI is added to $L(x)$. Next, if ${\tt isSatisfiable()}$ returns true with parameter {\tt out}, then \typePO is added to $L(x)$. Next, if $\typePO\notin L(x)$ and ${\tt isSatisfiable()}$ returns true with parameter {\tt out} with an artificial \typePO label in $L(parent(x))$, then \typePE is added to $L(x)$.
\end{labrule}

The corresponding instructions are formally given in Lines~25 to~34 of Algorithm~\ref{algo:labelnode}.
In our example (Figure~\ref{fig:ex_compo}), component $\texttt{E}$ is labeled both \typePI (with the suitable set $\{28,30\}$) and \typePO (with the suitable set $\{29,31\}$) while \texttt{B} is labeled \typePI (with the suitable set $\{2,8,10,13\}$) and \typePE (with the suitable set $\{1,8,11\}$).

\begin{lemma}\label{lem:labelnodebrmis}
  If $x \in \C\setminus \{R\}$ and all nodes of $children(x)$ are correctly labeled, then Labeling rule~\ref{rule:C} produces a correct labeling of $x$.
\end{lemma}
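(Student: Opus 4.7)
The plan is to leverage Lemma~\ref{lem:testrmis1}, which characterizes what ${\tt isSatisfiable}$ detects in terms of \emph{suitable} RMISs of \GP, and then to translate those statements about \GP into statements about $G[T_x]$ using the assumed correctness of the labelings of $\children(x)$. Throughout, let $Y=\children(x)\subseteq \A$, and recall that for every $y\in Y$, $v(y)\in G_x$.

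First I would establish the main correspondence: for a given status (included or excluded) of $v(x)$, a suitable RMIS $M^+$ of \GP with that status exists if and only if an RMIS $M$ of $G[T_x]$ with that status exists. Given $M^+$, for every $y\in Y$ the suitability conditions on the triple $(v(y),v'(y),v''(y))$ determine which label of $L(y)$ is ``activated'' (\typePI, \typePO, or \typePE); by correctness of $L(y)$, an RMIS $M_y$ of $G[T_y]$---or of $G_a[T_y]$ in the \typePE case, with the aerial neighbor being $v(x)$ itself, which does lie in $M^+$---exists with the matching status on $v(y)$. Since every $v(y)$ is an articulation point of $G[T_x]$, the union $(M^+\cap V(G_x))\cup\bigcup_{y\in Y}(M_y\setminus\{v(y)\})$ can be verified to be an RMIS of $G[T_x]$, using the fact that every edge across an articulation point is a bridge and that independence and maximality compose cleanly across articulations. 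Conversely, given an RMIS $M$ of $G[T_x]$, its restriction to $G_x$ together with an appropriate extension into the extra paths of \GP (following the construction in the proof of Lemma~\ref{lem:testrmis1}) yields a suitable RMIS of \GP with the same status on $v(x)$.

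With this correspondence in hand, the \typePI and \typePO cases (Lines~26--29 of Algorithm~\ref{algo:labelnode}) follow directly from Lemma~\ref{lem:testrmis1} applied to parameters ${\tt in}$ and ${\tt out}$. The subtle case is \typePE. By Definition~\ref{def:correct-labeling}, \typePE must be added exactly when $G_a[T_x]$ admits an RMIS excluding $v(x)$ \emph{and} \typePO is absent. The guard on Line~30 enforces the exclusivity with \typePO; for the existence part, Lines~31--34 artificially set $L(\parent(x))=\{\typePO\}$, re-invoke ${\tt isSatisfiable}(x,{\tt out})$, and restore the original label. I would justify this trick via Remark~\ref{rem:POPE} and Remark~\ref{rem:PO}: an artificial \typePO on the parent encodes the promise that an external neighbor of $v(x)$ is included in the global RMIS, which plays exactly the role of the aerial vertex $a$ in $G_a[T_x]$. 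Hence a satisfying assignment in this call corresponds to an RMIS of $G_a[T_x]$ excluding $v(x)$, as required.

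The hard part will be the careful verification of the \typePE correspondence: one must argue that the artificial \typePO on the parent and the aerial vertex $a$ of $G_a[T_x]$ are genuinely interchangeable, and that no spurious independence or robustness violation is introduced when the outer attachment is abstracted into this label. Concretely, this amounts to showing that the 2-SAT construction of Algorithm~\ref{algo:testrmis}, invoked on \GP with the modified parent label, faithfully reflects the existence of an RMIS of $G_a[T_x]$ that excludes $v(x)$, with $a$ substituting for the aerial contribution promised by the parent's artificial \typePO. This verification mirrors the two directions of the proof of Lemma~\ref{lem:testrmis1}, with the added bookkeeping that $v(x)$ itself is the non-weak uncovered vertex, now covered by $a$ instead of an internal neighbor.
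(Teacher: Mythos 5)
Your proposal is correct and follows essentially the same route as the paper, whose own proof is the one-liner ``follows from Lemma~\ref{lem:testrmis1} and the definition of a suitable RMIS of $G_x^+$'': you are simply making explicit the gluing of the children's RMISs across articulation points and the correspondence between the artificial \typePO on $\parent(x)$ and the aerial vertex of $G_a[T_x]$, both of which the paper leaves implicit in the surrounding prose. One small slip to fix: in the \typePE case for a child $y$, the aerial neighbor of $v(y)$ in $G_a[T_y]$ is realized by a neighbor of $v(y)$ \emph{inside} $G_x$ that lies in $M^+$ (such a neighbor exists by maximality of $M^+$, since $v'(y)\notin M^+$), not by $v(x)$ itself, which need not be adjacent to $v(y)$ nor belong to $M^+$.
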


\begin{proof}
  Follows from Lemma~\ref{lem:testrmis1} and definition of a suitable RMIS of $\GP$.
\end{proof}



\subsection{The actual algorithm}
\label{sec:actual-algorithm}

Now that the \ABCT decomposition and the labeling rules are described, the main algorithm is relatively easy to present. We start by giving an informal description of the algorithm,  along with the corresponding code, assuming that the \ABCT decomposition is given. Then, we give an example of execution in which the graph of Figure~\ref{fig:treecomp} and~\ref{fig:ex_compo} is entirely labeled. The time complexity of the algorithm is discussed in a subsequent section, in which we also discuss the cost of computing the \ABCT. Then, in Section~\ref{sec:constructivity}, we discuss how to turn this decision algorithm into a constructive algorithm that returns an RMIS (if one exists).

\subsubsection{Description}
Given the \ABCT $T$, the algorithm starts by chosing an arbitrary biconnected component $\R\in \C$ to be the root node of $T$ and setting it as the current node. (As already explained, if $G$ has no biconnected component, then it is itself a tree and any property is trivially robust.) The main component of the algorithm consists of a DFS recursion within $T$ starting from $R$. For every node $x$ in $T$, the actual labeling of $x$ occurs after the last child of $x$ has been visited by the DFS, \ie after the labels of all the children are known. At any point of the execution, if the set of labels of a node is empty {\em after} executing the labeling rules, this means that the current subtree (and a fortiori $G$ itself) does not admit a robust MIS, thus the algorithm rejects (and terminates). If the execution survives until the recursion returns at the root $R$, then a special call to ${\tt isSatisfiable}$ is performed without constraints on the attachment vertex (which does not exists at $R$). If this call returns true, then $G$ admits an RMIS, otherwise it does not. The code of the main algorithm is given on Algorithm~\ref{algo:findrmis}.

\begin{algorithm}[h]
\caption{Main algorithm}\label{algo:findrmis}
\begin{footnotesize}
\textbf{Input:} An \ABCT $T$ whose nodes are $\A\cup\B\cup\C\cup\PV$, rooted in some $R\in \C$\\
\textbf{Output:} A labeling $L$ of $T$
\begin{tabbing}
XX: \= XX \= XX \= XX \= XX \= XX \= XX \= XX \=\kill
01: \> \textbf{foreach} $x\in\ \children(R)$ \textbf{do}\\
02: \> \> \textbf{LabelSubTree}$(x)$\\
03: \> \textbf{if} \textbf{isSatisfiable}$(R,{\tt none})$ \textbf{then}\\
04: \> \> {\bf accept}\\
05: \> \textbf{else}\\
04: \> \> {\bf reject}

\end{tabbing}\vspace{-8pt}
\end{footnotesize}
\end{algorithm}

The recursion of the DFS is given by the procedure in Algorithm~\ref{algo:labelsubtreermis}, which consists of descending the children first, then calling the labeling procedure described on Algorithm~\vref{algo:labelnode}, and finally reject if none of the labels were possible for the current node.

\begin{algorithm}[h]
\caption{Function \textbf{LabelSubTree}$(x)$}\label{algo:labelsubtreermis}
\begin{footnotesize}
\textbf{Parameters:} A node $x \in V_T$
\begin{tabbing}
XX: \= XX \= XX \= XX \= XX \= XX \= XX \= XX \=\kill
01: \> \textbf{foreach} $c\in \children(x)$ \textbf{do}\\
02: \> \> \textbf{LabelSubTree}$(c)$\\
03: \> $L(x) \gets$ \textbf{LabelNode}$(x)$\\
04: \> \textbf{if} $L(x)=\emptyset$ \textbf{then}\\
05: \> \> {\bf reject}
\end{tabbing}\vspace{-8pt}
\end{footnotesize}
\end{algorithm}

The correctness of the whole mainly follows from the correctness of the relabeling rules. Indeed, the notion of {\em correct labeling} (see Definition~\ref{def:correct-labeling}) specifies that the label of a node encodes the existence of RMIS in the current subtree (parametrized by the status of the highest vertex). On the other hand, Lemmas \ref{lem:labelnodeprmis} to \ref{lem:labelnodebrmis} guarantee us that if all the children of a node $x$ are correctly labeled, then the appropriate labeling rule will produce a correct labeling of $x$ (unless the algorithm rejects). If thus follows, by induction on the tree defined by the DFS, that if the execution survives the recursion started at Lines 01 and 02 of Algorithm~\ref{algo:findrmis}, then all the children of the root node $R$ are correctly labeled. And Lemma~\ref{lem:testrmis1} allows us to conclude based the call to ${\tt isSatisfiable}$ with parameter ${\tt none}$.

\subsubsection{Example}

The outcome of the labeling is shown in Figure~\ref{fig:tagged}, corresponding to the input graph of Figure~\ref{fig:treecomp} and its \ABCT shown in Figure~\ref{fig:ex_compo}.

\begin{figure}[h]
\centering
\includegraphics[width=1\textwidth] {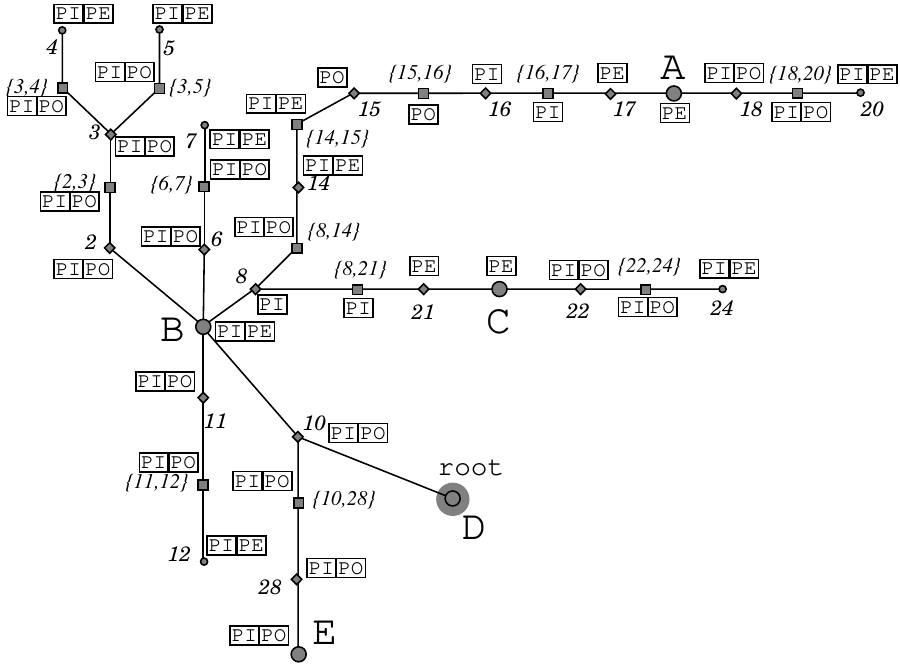}
  \vspace{-2.4cm}
  
  \hfill
  \begin{tikzpicture}
    \path (0,0) node[text width=5cm](legend){
      {\small $\circ$}: pendant vertices ($\in \PV$) \\
      {\scriptsize $\Diamond$}: articulation points ($\in \A$)\\
      {\tiny $\Box$}: bridge edges ($\in \B$)\\
      {\scriptsize $\bigcirc$}: biconnected components ($\in \C$)
    };
  \end{tikzpicture}
\caption{\label{fig:tagged}Outcome of the algorithm corresponding to the input graph of Figure~\ref{fig:treecomp}.}
\end{figure}

\subsection{Time complexity}
\label{sec:complexity}

In this paragraph, we analyse the running time of the algorithm. In the following, $n$ denotes the number of vertices in $G$, which is related to the number of vertices in $T$ by a constant factor (thus the same $O$ notation is used for both interchangeably). The purpose is not to present a fine-grained analysis. Instead, we favor simple arguments over optimal ones, the resulting running time being in any case dominated by $O(n^3)$ time steps. 

\begin{lemma}
  The \ABCT decomposition can be computed in $O(n^2)$ time steps.
\end{lemma}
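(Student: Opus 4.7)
The plan is to lean entirely on standard DFS-based graph algorithms and then argue that assembling the four node sets into the tree structure costs no more than the $O(n+m)$ time of those algorithms, which is $O(n^2)$ since $m \le \binom{n}{2}$ for a simple graph.

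First, I would compute $\PV$ by a single pass over the adjacency lists to read off the degrees, collecting every vertex of degree $1$; this is $O(n+m)$. Then I would run Tarjan's classical depth-first search on $G$, which, in a single pass of cost $O(n+m)$, produces the low-point values that simultaneously identify: the articulation points (yielding $\A$), the bridges (yielding $\B$), and the maximal biconnected subgraphs (yielding $\C$, returned for instance by the stack-of-edges variant of Hopcroft--Tarjan). During this pass I would also record, for every articulation point $a \in \A$, the list of biconnected components that contain $a$, and, for every bridge $b \in \B$, the (at most two) endpoints of $b$ in $\A \cup \PV$; both pieces of bookkeeping come for free from the DFS.

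With the four node sets and these auxiliary incidence lists in hand, $E_T$ is built by two straightforward sweeps: one over $\A$ that adds the edges $\{a,X\}$ for every $X \in \C$ containing $a$, and one over $\B$ that adds the edges $\{v,Y\}$ for each of the endpoints of each bridge $Y$. The total work of these sweeps is proportional to the sum of all incidences recorded in the DFS, which is $O(n+m)$. A small technical point is that a single vertex of $V_G$ can give rise to several distinct nodes of $V_T$ (a pendant node, an articulation-point node, several bridge-endpoints, and participation in several biconnected components), so one must use a dictionary keyed by underlying vertex to avoid collapsing them by accident; since only $O(n)$ such lookups occur, this does not change the asymptotics.

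There is no real obstacle here beyond citing the linear-time correctness of Tarjan's algorithms; combining the above, the entire decomposition is produced in $O(n+m) = O(n^2)$ time, which matches the bound claimed in the lemma.
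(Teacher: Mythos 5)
Your proposal is correct and follows essentially the same route as the paper: run the Hopcroft--Tarjan DFS to obtain the biconnected components, articulation points, and bridges, then assemble the four node sets and the incidence edges of the tree. The only difference is that you keep the whole construction at $O(n+m)$ whereas the paper is content to bound some of the assembly loops more loosely by $O(n^2)$; both yield the claimed $O(n^2)$ bound.
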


\begin{proof}
  The biconnected components of $G$ can be computed using a classical algorithm by Hopcroft and Tarjan~\cite{HT73}, which has the same complexity as a DFS in $G$, thus in time $O(n^2)$ in the worst case that $G$ is dense. After this process, the components of size $1$ and degree $1$ correspond to the nodes in $\PV$, the edges between components correspond to bridges is $\B$, and the endpoints of these bridges correspond to articulation points in $\A$. All of these can be determined either through looping over the components themselves (at most $O(n)$), or looping over the edges of $G$ (at most $O(n^2)$) in order to compare the membership of their endpoints to the components (this latter test takes constant time after Hopcroft and Tarjan's algorithm). Once the four types of nodes are identified, the relations between nodes in $\A$ and nodes in $\C$ can be found by looping over all the vertices of the components ($O(n)$ vertices overall, with constant time test for each) and the relations between nodes in $\B$ on the one hand, and nodes in $\PV$ or $\A$ in the other hand, can be found by looping over all the edges of $G$ ($O(n^2)$ edges overall, with constant time tests for each).
\end{proof}

\begin{lemma}
  \label{lem:2-SAT}
  Procedure ${\tt isSatisfiable()}$ takes $O(n^2)$ time steps.
\end{lemma}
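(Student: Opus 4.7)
The plan is to walk through Algorithm~\ref{algo:testrmis} line by line and show that each operation costs at most $O(n^2)$ time, using the fact that $|V_{G_x}| \le n$ and hence $|E_{G_x}| \le \binom{n}{2} = O(n^2)$.

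First I would handle the preprocessing (Lines~01--02). Identifying $E^{\typePO}$ requires a single pass over the edges of $G_x$, testing in constant time (after the labeling pass of the main algorithm) whether both endpoints carry the label \typePO; this takes $O(n^2)$ time. Constructing $X = G_x \setminus E^{\typePO}$ is then immediate within the same bound. Checking bipartiteness of $X$ and simultaneously computing its connected components $X_1,\ldots,X_k$ with a proper two-coloring (Lines~03--08) can all be done by a standard BFS/DFS in time $O(|V_{G_x}| + |E_{G_x}|) = O(n^2)$; this pass directly produces the labels $\ell(v) \in \{x_i, \neg x_i\}$.

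Next I would bound the cost of building the 2-SAT formula $F$ (Lines~09--19). The loop on Lines~09--13 iterates over articulation points whose attachment vertex lies in $G_x$, of which there are at most $n$, each contributing at most one unit clause in constant time, for a total of $O(n)$. The loop on Lines~14--15 iterates over $E^{\typePO}$, producing one binary clause per edge, which is $O(n^2)$ clauses inserted in total constant time each. Lines~16--19 add at most one more unit clause. Therefore $F$ has $O(n)$ variables and $O(n^2)$ clauses, and it is built in $O(n^2)$ time.

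Finally, Line~20 invokes a 2-SAT solver. Using the classical linear-time algorithm of Aspvall, Plass, and Tarjan (based on the implication graph and its strongly connected components), satisfiability of a 2-SAT formula with $V$ variables and $C$ clauses is decided in $O(V + C)$ time, which here is $O(n^2)$. Summing all the contributions, the total running time of $\mathtt{isSatisfiable}$ is $O(n^2)$. The only subtle point is ensuring that the number of clauses stays within $O(n^2)$; this follows because the only super-linear contribution comes from $|E^{\typePO}| \le |E_{G_x}| = O(n^2)$, and no other step inflates the formula size.
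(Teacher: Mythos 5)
Your proof is correct and follows essentially the same line-by-line accounting as the paper: $O(n^2)$ for building $E^{\typePO}$ and $X$, a BFS-based bipartiteness test and two-coloring, $O(n)$ and $O(n^2)$ for the two clause-generation loops, and a linear-time 2-SAT solver on a formula with $O(n)$ variables and $O(n^2)$ clauses. No substantive difference from the paper's argument.
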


\begin{proof}
  Building the set $E^{\typePO}$ can be done by looping over all the edges of $G_x$ and looking at the labels of its endpoints (in constant time), thus in $O(n^2)$ time steps. Building $X$ without these edges is also linear in the number of edges, thus costs $O(n^2)$ time steps. Then, testing if a graph is bipartite can be done by performing a 2-coloring of it along a BFS as follows: give color 1 to an arbitrary first vertex $v$, then as the BFS progresses, give color 2 to all the vertices at distance $1$ from $v$, then color 1 again to all the vertices at distance $2$ from $v$, etc. Possibly restart the BFS algorithm if $X$ is not connected. As the BFS algorithm runs in time linear in the number of edges (whether or not $X$ is connected), the $2$-coloring algorithm will take up to $O(n^2)$ time steps, and the detection of whether the coloring is proper can also be done in $O(n^2)$ time steps (checking the two endpoints of every edge). If the procedure is still running ($X$ is bipartite), the obtained $2$-coloring can be reused directly over lines 07 and 08 to assign to every vertex its SAT constraint $x_i$ or $\neg x_i$, depending on its color and connected component (the latter being possibly determined during the BFS). Then, Lines 09 to 13 consist of a loop over (part of) the vertices with constant time processing time for each. Lines 14 and 15 consist of a loop over the edges of $E^{\typePO}$ and constant processing time for each, thus at most $O(n^2)$. Lines 16 to 19 take constant time. Finally, the resulting formula is made of $O(n^2)$ constraints over $O(n)$ clauses, and it is known that 2-SAT is solvable in time linear in the number of constraints. In conclusion, the overall procedure takes $O(n^2)$ time steps.
\end{proof}

\begin{lemma}
  The decision algorithm presented in Section~\ref{sec:algo} takes $O(n^3)$ time steps.
\end{lemma}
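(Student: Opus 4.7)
The plan is to combine the three cost contributions of the algorithm: the one-time construction of the \ABCT, the DFS traversal and labelling of individual nodes, and the final call to \textbf{isSatisfiable} at the root. The decomposition cost is already $O(n^2)$ by the preceding lemma, and the final call contributes $O(n^2)$ by Lemma~\ref{lem:2-SAT}, so everything reduces to bounding the total cost of \textbf{LabelSubTree} recursions over $T$.

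First, I would observe that the number of nodes in $T$ is $O(n)$. Indeed, pendant nodes (in $\PV$) and articulation points (in $\A$) are in bijection with a subset of the vertices of $G$, the bridge nodes (in $\B$) are in bijection with a subset of the edges of $G$ that form a forest when quotiented by the biconnected components, and the number of biconnected components (in $\C$) is classically at most $n-1$. Hence $|V_T| = O(n)$ and the DFS visits each node once.

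Next, I would bound the per-node cost of \textbf{LabelNode}$(x)$ by cases. For $x \in \PV$, the rule is a constant assignment. For $x \in \B$, inspection of Lines~15--23 of Algorithm~\ref{algo:labelnode} shows that processing is constant. For $x \in \A$, the rule iterates over $\children(x)$ with constant-time work per child, so the cost is $O(|\children(x)|)$; summing over all articulation nodes yields $O(|V_T|)=O(n)$ total, since $\sum_x |\children(x)| = |V_T|-1$. The only expensive case is $x \in \C$, where up to three calls to \textbf{isSatisfiable} are made, each costing $O(n^2)$ by Lemma~\ref{lem:2-SAT}; since there are at most $O(n)$ biconnected components, the cumulative cost of this case over the whole traversal is $O(n) \cdot O(n^2) = O(n^3)$.

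Finally, I would put the pieces together: $O(n^2)$ for the decomposition, $O(n)$ for all non-$\C$ labellings combined, $O(n^3)$ for all $\C$ labellings combined, and $O(n^2)$ for the final call at the root, yielding $O(n^3)$ overall. The main subtlety to watch out for is the amortisation at articulation nodes (each child is counted once globally, not once per ancestor), but this is an immediate consequence of $T$ being a tree. A minor technical point is to ensure that every argument bounded ``by the number of edges of $G_x$'' in Lemma~\ref{lem:2-SAT} still gives $O(n^2)$ even when summed across all biconnected components; since the sizes of the components are disjoint at the edge level, this sum is in fact $O(n^2)$ per call and not cumulatively worse, so the coarse bound $O(n) \cdot O(n^2)$ suffices without further refinement.
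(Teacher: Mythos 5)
Your proof is correct and follows essentially the same route as the paper: a case analysis of the per-node labeling cost over the four node types, with the dominant term coming from the $O(n)$ nodes in $\C$ each invoking \textbf{isSatisfiable} at $O(n^2)$ per call. Your amortisation of the $\A$-node loops over children is a slight refinement the paper does not bother with (it simply bounds each such node by $O(n)$), but both arguments land on the same $O(n^3)$ bound.
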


\begin{proof}
  The main loop of the algorithm performs a depth first search in the \ABCT $T$, whose running time is $O(n)$. Every node $x$ is labeled once, after the DFS has finished the exploration of its subtree. At that moment, four cases apply depending on whether $x$ is of type $\A,\B,\C,$ or $\PV$. If $x$ is of type $\PV$, the labeling takes constant time (see Algorithm~\ref{algo:labelnode} Lines 03 to 04, and 17 to 25, respectively). If $x$ is of type $\B$, the labeling also takes constant time because $x$ has only one child (Lines 17 to 25). If $x$ is of type $\A$, then the labeling rule performs two loops over the children of $x$, thus the labeling takes $O(n)$ time steps (the nested existential quantifier can be tested upon the same loop). If $x$ is in $\C$, then two or three calls are made to ${\tt isSatisfiable()}$, whose running time is $O(n^2)$ by Lemma~\ref{lem:2-SAT}. As a result, the overall running time is dominated by the latter being applied to possibly $O(n)$ nodes, for a total of $O(n^3)$ time steps.
\end{proof}

\subsection{Constructivity}\label{sec:constructivity}

The algorithm presented over the previous subsections is a decision algorithm that returns yes or no, or equivalently, accepts or rejects the given instance. However, the algorithm is fairly easy to adapt in order to obtain an actual RMIS if one exists. There are essentially two options: either the RMIS is constructed in parallel of the decision algorithm, or a dedicated algorithm {\em redescends} the tree after the algorithm has completed. Here, we briefly sketch both solutions.
\begin{enumerate}
\item {\em Construction in parallel.} In this version, the main difficulty is that the algorithm cannot decide which of the current labels will eventually be compatible with later (higher) constraints in the tree. For example, a node $x$ may admit an RMIS in its subtree whether or not its attachment vertex $v(x)$ is included (label \typePI and \typePO), but it might later become mandatory that $v(x)$ is included. Thus, the idea is to build several RMIS simultaneously, one for each possible label of the current attachment vertex. These RMIS can be built by extending the ones of the children (with compatible status), starting from the leaves, thus at most two RMIS need to be memorized for the current subtree. In the particular case that $x$ is a biconnected component, it is needed that the satisfying assignment found by the 2-SAT solver be known and transposed into a membership status for each vertex (similarly to the set $S$ in the proof of Lemma~\ref{lem:testrmis1}).
\begin{figure}[h]
  \centering 
  \includegraphics[width=10cm] {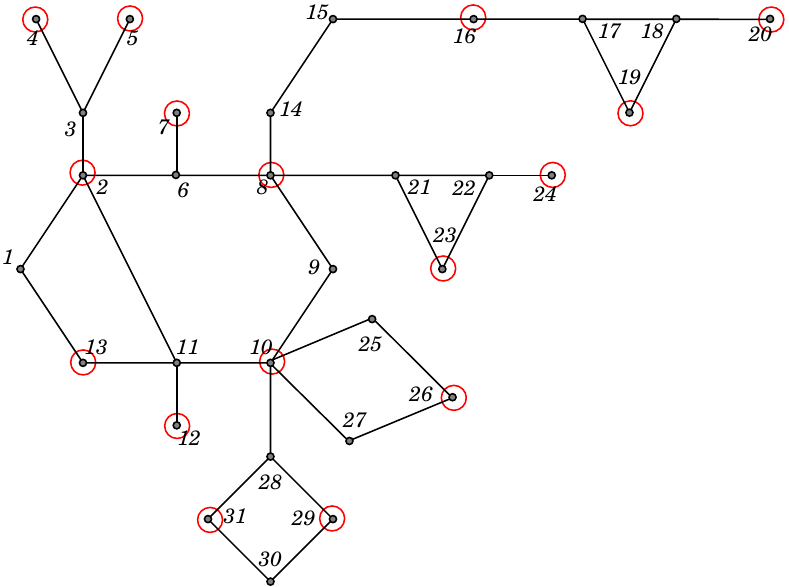}
 \caption{A possible RMIS of $G$ that includes $16$ vertices.\label{fig:ex_rmis2}}
\end{figure}
\item {\em Construction afterwards.} In this version, we are given an already labeled tree whose root component has been solved (successfully).
  The algorithm consists of redescending the labeled tree, by making {\em arbitrary choices} whenever the considered attachment vertex has several labels. Indeed, the existence of an RMIS corresponding to each such label is guaranteed when redescending the tree. The cases that $x$ is of type $\A, \B,$ or $\PV$ are trivial (just pick one of the labels of the child and translate it as a membership status for its attachment vertex). The only difficulty is when $x$ is a biconnected components. In this case, a label is chosen arbitrarily among the labels of its attachment vertex, and the 2-SAT reduction procedure is called again with the configuration corresponding to this constraint in order to obtain the satisfying assignement that specifies which inner vertex must be included to the MIS. (Alternatively, the satisfying assignments might be recorded while in the decision algorithm.) 
\end{enumerate}


A possible outcome of the constructive algorithm, obtained by manually following the first method is shown in Figure~\ref{fig:ex_rmis2}.


\section{Further discussion on the temporal interpretation of robustness}
\label{sec:temporal-interpretation}
\newcommand{\ER}{\ensuremath{{\cal E^R}}\xspace}
\newcommand{\EP}{\ensuremath{{\cal E^P}}\xspace}
\newcommand{\TCR}{{\ensuremath{{\cal TC^{\cal R}}}}\xspace}
\newcommand{\G}{\ensuremath{\mathcal{G}}\xspace}
\tikzset{circ/.style={fill=none,red,gray,thick,inner sep = 3pt}}

\tikzset{every node/.style={defnode}}
\def\carre (#1,#2){
  \path (#1,#2+1) node (a){};
  \path (#1+1,#2+1) node (b){};
  \path (#1,#2) node (c){};
  \path (#1+1,#2) node (d){};
}

As discussed in the introduction, the original motivation behind the concept of robustness comes from highly-dynamic networks. Here, we review some of the background that led to its definition. As the content pertains to the temporal interpretation, the reader interested in the notion of robustness {\it per se} can safely omit reading this section. 
In~\cite{CMM11}, three canonical ways of redefining combinatorial problems in highly-dynamic networks were explored, in particular covering problems such as {\em dominating set}, {\em vertex cover}, or {\em independent set}. The main focus in~\cite{CMM11} was on dominating sets, which are subsets of nodes $S\subseteq V$ such that each node in the network is either in $S$, or has a neighbor in $S$. 
Three natural adaptations of these problems were formulated:

\begin{itemize}
\item {\em Temporal} version: the covering property of the problem is achieved {\em over time} -- {\it E.g.,} for domination, every node outside the set must share an edge {\em at least once} over the lifetime with a node in the set.

\item {\em Evolving} version: the covering property must be satisfied at any given instant, relative to the current structural snapshot of the network; however, the solution can be updated as the structure of the network changes. This version is closer to the traditional ``dynamic graph algorithms'' (also referred to as ``reoptimization'').

\item {\em Permanent} version: the covering property must be satisfied in every snapshot (as for the evolving version), but here the solution cannot be updated.
\end{itemize}

The three versions are related. For example, in the case of dominating sets, the temporal version consists of computing a dominating set in the {\em footprint} $\cup \G$ of the network (\ie the union of all snapshots) and the permanent version consists of computing a solution in $\cap \G$ (their intersection)~\cite{CMM11}. Furthermore, solutions to the \emph{permanent} and the \emph{temporal} versions form upper and lower bounds for the \emph{evolving} version, respectively.

In~\cite{DKP15}, the temporal definition above is extended to {\em infinite} lifetime networks, by requiring that the covering relation (domination or else) be satisfied {\em infinitely often}. 
The authors observe that whenever the network is temporally connected in a recurrent way,
which corresponds to Class~{$\cal TC^R$} in~\cite{Cas18} (and Class~$5$ in~\cite{CFQS12}), one has the guarantee that among all the edges that appear at some point, at least a connected spanning subset of them must reappear forever~\cite{BDKP16}. In other words, an equivalent characterization of $\cal TC^R$ is that the {\em eventual footprint} of the network (\ie the union of those edges which reappear forever) is connected. The uncertainty as to which part of the footprint will belong to the eventual footprint is the main motivation behind robustness.



For completeness, let us cite a few recent works that considered the problem of computing temporal covering structures in highly-dynamic networks (or graphs), although not related to robustness. Mandal et al.~\cite{permanent} study approximation algorithms for the permanent version of {\em dominating sets}. Bamberger {\it et al.}~\cite{kuhn} also consider a temporal variant of vertex coloring and MIS. Finally, Akrida {\it et al.}~\cite{AMSZ18} define a variant of the temporal version in the case of vertex cover, in which a solution is not just a set of nodes (as it was in~\cite{CMM11}) but a set of pairs $(nodes, times)$, allowing different nodes to cover the edges at different times (and within a sliding time window). 

\section{Concluding remarks}
\label{sec:conclusion}

This paper introduced a new form of heredity called robustness, motivated by various kinds of dynamic networks. In particular, we believe that robustness is a key property of highly dynamic systems 
for achieving stable structures in unstable environments.  

Focusing on the classical covering problem MIS, we characterized the set of graphs in which all MISs are robust. We
gave partial characterizations of the existential analogues of this class, namely graphs that {\em admit} a robust solution. 
We characterized the class entirely by means of a polynomial time algorithm which finds a robust MIS in an arbitrary graph is one exists (and rejects otherwise).
Whether a characterization of the existential classes exists in terms of elementary graph properties is an open question. It would also be interesting to investigate the robustness of other types of structures (\eg minimal dominating sets) and of basic graph properties, with the aim to understand robustness at a deeper level. For example, bipartiteness or ``sputnikness'' are themselves robust properties of a graph.

\section*{Acknowledgment}

\noindent This research has been supported by ANR project ESTATE (ANR-16-CE25-0009-03). 

\bibliographystyle{plain} 
\bibliography{robustness}

\end{document}